\newif\ifpdf

\documentclass[12pt,onecolumn,draftcls]{IEEEtran}
\usepackage{setspace}
\doublespacing
\pagenumbering{gobble}

\ifpdf
 \usepackage[pdftex]{graphicx}
\else
 \usepackage{graphicx}
\fi
\usepackage{url}
\usepackage[cmex10]{amsmath}
\interdisplaylinepenalty=2500

\usepackage{amsmath}
\usepackage{amssymb}
\usepackage{latexsym}
\usepackage{epsfig}
\usepackage{graphicx}
\usepackage{color}
\usepackage{setspace}
\usepackage{stmaryrd}

\usepackage{array}
\newcolumntype{C}[1]{>{\centering\let\newline\\\arraybackslash\hspace{0pt}}m{#1}}
\newcolumntype{L}[1]{>{\raggedright\let\newline\\\arraybackslash\hspace{0pt}}m{#1}}
\newcolumntype{R}[1]{>{\raggedleft\let\newline\\\arraybackslash\hspace{0pt}}m{#1}}
\usepackage{multirow}

\usepackage{enumitem}

\newcommand\mb[1]{\mathbf{#1}}

\newcommand{\defeq}{\stackrel{def}{=}}

\newtheorem{theorem}{Theorem}
\newtheorem{lemma}{Lemma}
\newtheorem{corollary}{Corollary}

\newtheorem{definition}{Definition}
\newtheorem{example}{Example}

\newtheorem{remark}{Remark}
\newcommand{\suchthat}{%
      \mathrel{\ooalign{$\ni$\cr\kern-1pt$-$\kern-6.5pt$-$}}}

\title{New Constructions of Zero-Correlation Zone Sequences}
\author{Yen-Cheng Liu, Ching-Wei Chen, and Yu T. Su
\thanks{Y.-C. Liu and Y. T. Su (correspondence addressee) are with
the Institute of Communications Engineering, National Chiao Tung
University, Hsinchu, Taiwan (email: ycliu@ieee.org;
ytsu@nctu.edu.tw). C.-W. Chen is with National Instruments Taiwan
Corp., Taipei, Taiwan (email: penguinjazzy@gmail.com). The
material in this paper was presented in part at the IEEE 2009
International Symposium on Information Theory.}
        }

\begin{document}
\ifpdf
 \DeclareGraphicsExtensions{.pdf,.jpg}
\else
 \DeclareGraphicsExtensions{.eps}
\fi \maketitle

\begin{abstract}
In this paper, we propose three classes of systematic approaches
for constructing zero correlation zone (ZCZ) sequence families. In
most cases, these approaches are capable of generating sequence
families that achieve the upper bounds on the family size ($K$)
and the ZCZ width ($T$) for a given sequence period ($N$).

Our approaches can produce various binary and polyphase ZCZ
families with desired parameters $(N,K,T)$ and alphabet size. They
also provide additional tradeoffs amongst the above four system
parameters and are less constrained by the alphabet size. Furthermore,
the constructed families have nested-like property that can be either
decomposed or combined to constitute smaller or larger ZCZ
sequence sets. We make detailed comparisons with related works and
present some extended properties. For each approach, we provide
examples to numerically illustrate the proposed construction
procedure.
\end{abstract}

\begin{IEEEkeywords}
Hadamard matrix, mutually orthogonal complementary set of
sequences, periodic correlation, upsampling, zero-correlation zone
(ZCZ) sequence.
\end{IEEEkeywords}

\IEEEpeerreviewmaketitle
\newpage
\section{Introduction}
\IEEEPARstart{F}{amilies of} sequences with some desired periodic
or aperiodic autocorrelation (AC) and cross-correlation (CC)
properties are useful in communication and radar systems for
applications in identification, synchronization, ranging, or/and
interference mitigation. For example, to minimize the multiple
access interference (MAI) and self-interference (e.g.,
inter-symbol interference) in a multi-user, multi-path environment
or to avoid inter-antenna interference in a multiple-input,
multiple-output system, one would like to have an {\it ideal
sequence set} whose periodic AC functions are nonzero only at the
zeroth correlation lag ($\tau=0$) and whose pairwise periodic CC
values are identically zero at any $\tau$ for all pairs of
sequences. Similar aperiodic properties are called for in
designing pulse compressed radar signal or two-dimensional array
waveforms to have an impulse-like ambiguity function satisfying
the resolution requirements.

Unfortunately, the ideal sequence set does not exist, i.e., it is
impossible to have impulse-like AC functions and zero CC functions
simultaneously in a sequence set. In fact, bounds on the magnitude
of CC and AC values derived in \cite{Welch} and \cite{Sarwate}
suggest that the design of sequence sets involves the tradeoff
between AC and CC values. An alternate compromise is to require
that the ideal AC and CC properties be maintained only at
correlation lags within a window called zero-correlation zone
(ZCZ) \cite{1st_ZCZ}. Sequences with such properties are known as
ZCZ sequences. Little or no system performance degradation results
if the correlation values outsides the ZCZ are immaterial to the
application of concern. For example, if the maximum channel delay
spread $T_m$ and the maximum distance between a base station and
co-channel users $D_m$ are known, a direct sequence spread
spectrum based multiple access system using a family of ZCZ
sequences with ZCZ width $|\tau|\leq T_m+2D_m/c$, where $c$ is the
speed of light, will be able to suppress MAI and multipath self
interference.

Other than the restrictions on the magnitude of correlation
values, practical implementation concerns prefer that the choice
of the sequence period be flexible and the family size be as large
as possible while keeping the desired AC and CC properties intact.
One also hope that the elements of the sequences be drawn from an
alphabet set as small as possible.

Various ZCZ sequence generation methods have been proposed
\cite{comp_ex_1}--\cite{8QAM}. The methods presented in
\cite{comp_ex_1}--\cite{comp_ex_4} are based on complementary
sequence sets. Interleaving techniques are shown to be effective
in constructing ZCZ sequences \cite{Polyphase}--\cite{interleave_Gong}.
They can be generalized to construct two-dimensional ($2$-D) ZCZ
arrays \cite{Array1,Array2} as well. Sets of ZCZ sequences derived
from manipulating perfect sequences were suggested in
\cite{perfect_seq_based} and \cite{GCL}. Park \textit{et al.}
\cite{PS} construct sequences that has nonzero AC only at
subperiodic correlation lags and zero CC across all lags.
By requiring the transform domain sequences to satisfy some
special properties, \cite{Transform,Zak_trans} present methods
that generate ZCZ sequences having zero CC across all lags.
Some ZCZ sequence sets can be partitioned into smaller subsets so
that the zero-CC zone of any two sequences drawn from different subsets are wider than that
among intra-subset sequences. Ternary or polyphase sequences
with such a property have been constructed via interleaving
techniques \cite{AZCZ2,AZCZ3} and in \cite{16QAM,8QAM} quadrature
amplitude-modulated (QAM) sequences are shown to be derivable from
binary or ternary sequences.

In this paper, we present three systematic approaches for
generating families of sequences whose periodic AC and CC
functions satisfy a variety of ZCZ requirements. While some known
ZCZ sequence construction methods employing Hadamard matrices in
time domain (e.g., \cite{interleave_Gong,GCL}), our first approach
uses such matrices to meet the desired transform domain properties
of a ZCZ sequence set instead. Sequence sets generated from this
approach are, by construction, optimal in the sense that the upper
bounds for family sizes and ZCZ widths are achieved. We further
employ a filtering operation to convert sequences of nonconstant
modulus symbols into polyphase ones without changing the
correlation properties.

Based upon a basic binary sequence (to be defined in Section
\ref{section:direct}) whose AC function satisfies the ZCZ
requirement, the second approach generates ZCZ sequence families
by a special nonuniform upsampling on unitary matrices. The
construction of basic sequences seems trivial and straightforward,
but from these simple sequences we are able to synthesize desired
polyphase ZCZ sequences through some refining steps that include
nonuniform upsampling and filtering.

Our third approach invokes the notion of complementary set of
sequences \cite{generalcom,complementary}. It bears the flavor of
the second approach and makes use of a basic binary sequence which
meets the ZCZ constraint as well as a collection of mutually
orthogonal complementary sets. While this method is capable of
generating binary sequences with sequence parameters identical to
those given in \cite{comp_ex_1} and \cite{comp_ex_4}, it can also
produce polyphase sequence sets that are unobtainable by the
conventional complementary set-based approaches.

The rest of this paper is organized as follows. We introduce basic
definitions and properties related to our investigation in the
next section. Section \ref{section:transform} begins with a brief
summary of important transform domain properties, followed by the
analysis and synthesis of the proposed transform domain approach.
We then show some ZCZ sequence sets generated by the transform
domain method in subsection \ref{section:trans_ex}. The direct
synthesis method is presented in section \ref{section:direct} and
construction examples are given in subsection
\ref{section:examples}. In section \ref{section:comp}, a
complementary sequence set based extension of the second approach
is proposed, followed by numerical construction examples given in
subsection \ref{section:CSS_examples}. For each proposed approach,
we tabulate the parametric constraint comparisons with related
methods. More detailed comparisons and discussions are given in
the form of remarks. Finally, some concluding notes are provided
in Section \ref{section:conclusion}.


\section{Definitions and Fundamental Properties}
%
\begin{definition}
An $(N,K)$ sequence set $\mathbf{X}$ is a set of $K$ sequences of
period $N$.
\end{definition}

\begin{definition}
The periodic CC function of two period-$N$ sequences $u \equiv
\{u(n)\}$ and $v \equiv \{u(n)\}$ is defined as
\begin{equation}
\theta_{uv}(\tau)=\sum_{n=0}^{N-1}u(n)v^*(n-\tau)=u(\tau)\varoast
        v^{*}(-\tau),
\end{equation}
where $\varoast$ denotes the circular convolution.
\end{definition}
Thus, the periodic AC function of sequence $u$ is simply
$\theta_{uu}(\tau)$. Since these CC and AC functions are also of
period $N$, to simplify the discussion we shall, throughout this
paper, limit the representations and examples of sequences or
sequence sets to a single period $(0\leq\tau\leq N-1)$ unless
necessary.


\begin{definition}
A sequence $\{u(n)\}$ that has an impulse-like (or ideal) AC
function, i.e., $\theta_{uu}(\tau)=\theta_{uu}(0)\delta(\tau)$, is
called a \textit{perfect sequence}, where
\begin{eqnarray}
\delta(\tau)\defeq\left\{
\begin{array}{ll}
1 , & \tau=0; \\
0 , & \tau \neq 0. \\
\end{array}
\right.
\end{eqnarray}
is the Dirac delta function.
\end{definition}

\begin{definition}
A sequence $\{u_v(n)\}$ is said to be obtained from
\textit{filtering} the sequence $u=\{u(n)\}$ by the sequence
$v=\{v(n)\}$ of the same period if
\begin{eqnarray}
{u}_v(n)\defeq u(n)\circ v(n)\defeq u(n) \varoast v^*(-n)\equiv
\theta_{uv}(n)
\end{eqnarray}
\label{def:mod}
\end{definition}
\begin{definition}
An $(N,K)$ sequence set, $\mathbf{C}=\{C_{0},C_{1},\cdots,
C_{K-1}\}$ is called an $(N,K,T)$ ZCZ sequence family (or set) if
$\forall~ C_i, C_j \in {\mathbf C}$, $i \neq j$,
$\theta_{C_iC_j}(\tau) = 0$ and
$\theta_{C_iC_i}(\tau)=\theta_{C_iC_i}(0)\delta(\tau)$,
$|\tau|_N\leq T<N$ where $T$ is the ZCZ width and
$|k|_N\stackrel{def}{=}k$ mod $N$. \label{def:ZCZ}
\end{definition}
In \cite{ZCZbound}, it was proved that
\begin{lemma}
The sequence period $N$, cardinality $K$ and ZCZ width $T$ of an
$(N,K,T)$ ZCZ family must satisfy the inequality
\begin{eqnarray}
K (T+1) \leq N. \label{eq:bound}
\end{eqnarray}
For $\pm 1$-valued binary sequence set, the bound becomes more
tight \cite{Polyphase}
\begin{eqnarray}
K T \leq \frac{N}{2},~K > 1. \label{eq:bi_bound}
\end{eqnarray}
\end{lemma}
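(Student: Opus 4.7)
The plan is to deduce both inequalities from a single orthogonality-and-dimension-counting argument in $\mathbb{C}^N$. The key identity, which follows directly from the definition of the periodic cross-correlation, is that if $C_i^{(\tau)}$ denotes the $\tau$-step cyclic shift of a sequence $C_i \in \mathbf{C}$, regarded as a vector in $\mathbb{C}^N$, then
\[
\langle C_i^{(\tau)}, C_j^{(\sigma)}\rangle \;=\; \sum_{n=0}^{N-1} C_i(n-\tau)\,C_j^{*}(n-\sigma) \;=\; \theta_{C_i C_j}(\sigma-\tau).
\]
Whenever $(i,\tau)\neq(j,\sigma)$ and $|\sigma-\tau|_N\le T$, Definition~\ref{def:ZCZ} forces this inner product to vanish. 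Hence any collection of shifts of sequences from $\mathbf{C}$ whose pairwise lag differences lie in $[-T,T]$ is mutually orthogonal in $\mathbb{C}^N$, and its size is at most $N$.

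For \eqref{eq:bound}, I would take the family $\{C_i^{(\tau)} : 0\le i\le K-1,\; 0\le\tau\le T\}$ of cardinality $K(T+1)$. Any two distinct members have lag difference bounded in absolute value by $T$, so by the identity above they are orthogonal, and the bound $K(T+1)\le N$ follows from $\dim\mathbb{C}^N=N$. This step is mechanical and I expect no obstacle here.

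The refined bound \eqref{eq:bi_bound} is the genuinely harder part, because the generic count above yields only $K(T+1)\le N$, i.e., $KT\le N-K$, which is weaker than $KT\le N/2$ as soon as $K<N/2$. The extra factor of two must therefore come from a feature unique to $\pm 1$-valued sequences. My plan is to enlarge the orthogonal family by pairing each shifted copy $C_i^{(\tau)}$ with an auxiliary real vector produced from the binary structure; a natural candidate is to split each $C_i$ into its ``positive'' and ``negative'' indicator vectors (so that $C_i = C_i^{+} - C_i^{-}$ and $C_i^{+}+C_i^{-}=\mathbf{1}$), and to verify that the ZCZ relations, combined with $K>1$ (which excludes the all-ones sequence and forces $\sum_n C_i(n)$ to satisfy a parity constraint), impose enough independent linear conditions to supply $2KT$ mutually orthogonal real vectors. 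If this direct combinatorial enlargement stalls, the fallback is a Fourier-domain argument: the ZCZ property translates into flatness of the aggregate power spectrum $\sum_i |\widehat{C}_i(k)|^2$ over a window of frequencies, and Parseval together with the integrality and $\bmod\,4$ congruences of $\theta_{C_iC_j}(\tau)$ forced by a $\pm 1$ alphabet yields the improved inequality. The main obstacle is isolating precisely which $\pm 1$ structural property is strong enough to double the dimension count; once identified, both the direct and Fourier routes conclude as before by capacity in $\mathbb{R}^N$.
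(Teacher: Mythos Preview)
The paper does not prove this lemma at all: it is quoted verbatim as a known result, with the general bound attributed to \cite{ZCZbound} and the binary refinement to \cite{Polyphase}. So there is no ``paper's own proof'' to compare your proposal against; I can only assess whether your sketch stands on its own.

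Your argument for \eqref{eq:bound} is correct and is the standard dimension-count proof: the $K(T+1)$ cyclic shifts $\{C_i^{(\tau)}:0\le i<K,\ 0\le\tau\le T\}$ are pairwise orthogonal in $\mathbb{C}^N$ by the ZCZ definition, and that is the whole story.

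For the binary bound \eqref{eq:bi_bound}, however, what you have written is not a proof but a wish list. The indicator-vector decomposition $C_i=C_i^{+}-C_i^{-}$ does not produce new vectors orthogonal to the shifts you already have: $C_i^{+}=(\mathbf{1}+C_i)/2$ lies in the span of $\mathbf{1}$ and $C_i$, so at best you add the single vector $\mathbf{1}$ to your orthogonal family, not $KT$ additional vectors. Your Fourier fallback is equally incomplete --- you invoke ``$\bmod\ 4$ congruences'' and ``flatness over a window'' without saying which identity forces the factor of two, and you explicitly concede that ``the main obstacle is isolating precisely which $\pm 1$ structural property is strong enough to double the dimension count.'' That obstacle is the entire content of the refined bound; naming it is not surmounting it. If you intend to supply a self-contained proof rather than cite \cite{Polyphase}, you need to locate and carry out an actual argument --- for instance, the original proof exploits the fact that for $\pm1$ sequences the aperiodic correlations obey specific parity constraints that feed into a Welch-type inequality, and this is not recoverable from the loose dimension count you propose.
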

This lemma describes the fundamental tradeoff among the sequence
period, family size, and ZCZ width. For a fixed $N$, increasing
the family size must be achieved at the cost of reduced ZCZ width
and vice versa. Note that for a set with a single perfect
sequence, (\ref{eq:bound}) is automatically satisfied because
$K=1$ and $T=N-1$.

\begin{definition}
An $N\times N$ matrix $\mathbf{U}$ is called a \textit{Hadamard
matrix} of order $N$ if and only if it satisfies two conditions:
\begin{enumerate}
\item[(i)]{Unimodularity: the components of $\mathbf{U}$ are of the
same magnitude $\sqrt{P}$;} 
\item[(ii)]{Orthogonality:
$\mathbf{U}\mathbf{U}^H=NP\mathbf{I}_N$ 
where $\mathbf{I}_N$ is the $N\times N$ identity matrix and
$(\cdot)^H$ denotes the conjugate transpose of the enclosed
matrix.}
\end{enumerate}
\end{definition}

\begin{definition}
The Matrix
\begin{eqnarray}
\mathbf{F}_M=\left[
\begin{array}{ccccc}
1 & 1 & 1 & \cdots & 1 \\
1 & W_{M}^{-1} & W_{M}^{-2} & \cdots & W_{M}^{-(M-1)} \\
1 & W_{M}^{-2} & W_{M}^{-4} & \cdots & W_{M}^{-2(M-1)} \\
\vdots    &  \vdots   &  \vdots   &   \ddots &        \vdots         \\
1 & W_{M}^{-(M-1)} & W_{M}^{-2(M-1)} & \cdots &W_{M}^{-(M-1)^2} \\
\end{array}
\right]^{}
\label{eq:DFT_matrix}
\end{eqnarray}
is called the $M$-discrete Fourier transform ($M$-DFT) matrix,
where $W_M^k=e^{j2\pi k/M}$, and its Hermitian $\mathbf{F}_M^{H}=
\mathbf{F}_M^{-1}$ is called the inverse $M$-DFT ($M$-IDFT)
matrix. The set of complex $M$th roots of unity, $\{W_M^k:
k=0,1,\cdots,M-1\}$, is called the $M$-ary phase-shift keying
($M$-PSK) set and a sequence with elements from the $M$-PSK
constellation is called an $M$-PSK sequence or a polyphase
sequence in general.
\end{definition}
Note that DFT matrices form a subcalss of the so-called Butson
Hadamard matrices \cite{Hada_ref}.
\begin{definition}
The $k$th \textit{Kronecker power} of matrix $\mathbf{U}$, denoted
by $\otimes^k \mathbf{U}$, is defined as
\begin{eqnarray}
\otimes^k\mathbf{U}=\underbrace{\mathbf{U}\otimes\mathbf{U}\otimes\cdots\otimes\mathbf{U}}_{\text{
      $\mathbf{U}$ appears $k$ times}},
\label{def:kron_prod}
\end{eqnarray}
where $\otimes$ denotes the Kronecker product.
\end{definition}
\begin{definition}
The matrices
\begin{eqnarray} \mathbf{H}_2=\left[
\begin{array}{rr}
1 & 1 \\
1 & -1 \\
\end{array}
\right]
\end{eqnarray}
and
\begin{eqnarray} \mathbf{H}_{2^n}=\otimes^n \mathbf{H}_2=\left[
\begin{array}{rr}
\mathbf{H}_{2^{n-1}} & \mathbf{H}_{2^{n-1}} \\
\mathbf{H}_{2^{n-1}} & -\mathbf{H}_{2^{n-1}} \\
\end{array}
\right],~ n=2,3,\cdots, \label{eq:recur_gen}
\end{eqnarray}
are called \textit{Sylvester Hadamard matrices}.
\label{def:std_Hadamard_mtx}
\end{definition}

The following lemma is essential to derive our construction
methods in the next section.
\begin{lemma}\cite{Hada_ref}
The Kronecker (tensor) product of any two Hadamard matrices is a
Hadamard matrix. \label{Hada_Hada_Hada}
\end{lemma}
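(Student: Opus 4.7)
The plan is to verify the two defining properties of a Hadamard matrix for the Kronecker product directly, using the fact that the paper's definition allows for a general modulus parameter $P$ rather than insisting on $\pm 1$ entries. Let $\mathbf{U}_1$ be a Hadamard matrix of order $N_1$ with entries of magnitude $\sqrt{P_1}$, and $\mathbf{U}_2$ one of order $N_2$ with entries of magnitude $\sqrt{P_2}$. The goal is to show that $\mathbf{U}_1 \otimes \mathbf{U}_2$ is a Hadamard matrix of order $N_1 N_2$ with entries of magnitude $\sqrt{P_1 P_2}$.

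First I would dispatch condition (i), unimodularity. Every entry of $\mathbf{U}_1 \otimes \mathbf{U}_2$ is, by the very definition of the Kronecker product, a product of one entry of $\mathbf{U}_1$ and one entry of $\mathbf{U}_2$. Its magnitude is therefore $\sqrt{P_1}\cdot\sqrt{P_2} = \sqrt{P_1 P_2}$, which is the same for every entry. So the unimodularity condition holds with parameter $P = P_1 P_2$.

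Next I would establish condition (ii), orthogonality, using the two standard algebraic properties of the Kronecker product: the conjugate-transpose identity $(\mathbf{A}\otimes\mathbf{B})^H = \mathbf{A}^H \otimes \mathbf{B}^H$, and the mixed-product rule $(\mathbf{A}\otimes\mathbf{B})(\mathbf{C}\otimes\mathbf{D}) = (\mathbf{AC})\otimes(\mathbf{BD})$ whenever the matrix dimensions are compatible. Applying these in succession gives
\begin{equation}
(\mathbf{U}_1 \otimes \mathbf{U}_2)(\mathbf{U}_1 \otimes \mathbf{U}_2)^H
= (\mathbf{U}_1 \mathbf{U}_1^H) \otimes (\mathbf{U}_2 \mathbf{U}_2^H)
= (N_1 P_1 \mathbf{I}_{N_1}) \otimes (N_2 P_2 \mathbf{I}_{N_2}).
\end{equation}
Since $\mathbf{I}_{N_1}\otimes\mathbf{I}_{N_2} = \mathbf{I}_{N_1 N_2}$ and scalars pull out of the Kronecker product, the right-hand side equals $N_1 N_2 P_1 P_2 \mathbf{I}_{N_1 N_2} = (N_1 N_2) P\, \mathbf{I}_{N_1 N_2}$. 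Combined with the modulus computation above, this is exactly the Hadamard condition of order $N_1 N_2$ with parameter $P = P_1 P_2$.

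There is no real obstacle in this proof; the only subtlety is simply to carry the parameter $P$ through, since the paper's Definition of Hadamard matrix allows entries with a common magnitude $\sqrt{P}$ rather than $\pm 1$ only. Keeping $P_1$ and $P_2$ general and verifying that they combine as $P_1 P_2$ is the one bookkeeping point that is easy to overlook but essential.
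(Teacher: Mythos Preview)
Your proof is correct and is the standard argument. Note, however, that the paper does not actually supply a proof of this lemma: it is stated with a citation to \cite{Hada_ref} and used as a known fact, so there is no ``paper's own proof'' to compare against. Your verification of unimodularity and orthogonality via the mixed-product and conjugate-transpose identities for the Kronecker product is exactly the expected direct argument, and your care in tracking the general modulus parameter $P$ matches the paper's Definition of a Hadamard matrix.
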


\section{Transform Domain Construction Methods}
\label{section:transform} We first review some transform domain
properties of sequences and their correlation functions. A class
of ZCZ sequence construction approaches based on transform domain
properties is then presented. Detailed comparisons with two
related proposals are made and a few construction examples are
provided.
\subsection{Useful Transform Domain Properties}
Denote by DFT$\{u(n)\}$, the DFT of a periodic sequence $\{u(n)\}$
and by IDFT$\{U(k)\}$, the inverse DFT (IDFT) of a periodic
transform domain sequence $\{U(k)\}$. We then immediately have
\begin{lemma}
The DFT of the CC function $\theta_{uv}(\tau)$ of two period-$N$
sequences, $\{u(n)\}$ and $\{v(n)\}$, is equal to $U(k)V^{*}(k)$,
where $\{U(k)\}=$ DFT$\{u(n)\}$ and $\{V(k)\}=$ DFT$\{v(n)\}$.
\label{CC_DFT}
\end{lemma}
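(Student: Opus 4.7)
The plan is to exploit the convolution form of the periodic cross-correlation already recorded in Definition~2, namely $\theta_{uv}(\tau) = u(\tau)\varoast v^{*}(-\tau)$, and then invoke the circular convolution theorem together with a short calculation for the DFT of a time-reversed and conjugated sequence. The overall structure is very short: convolution theorem, plus identification of $\mathrm{DFT}\{v^{*}(-n)\}$ with $V^{*}(k)$.

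First I would write
\begin{equation}
\mathrm{DFT}\{\theta_{uv}(\tau)\}=\mathrm{DFT}\{u(\tau)\varoast v^{*}(-\tau)\}=U(k)\cdot\mathrm{DFT}\{v^{*}(-\tau)\},
\end{equation}
where the second equality is the circular convolution theorem for period-$N$ sequences (a standard consequence of the DFT definition and the orthogonality relation $\sum_n W_N^{(k-l)n}=N\delta(k-l)$). If the convolution theorem is not regarded as already available, I would verify it in one line by plugging the convolution sum into the DFT, swapping the order of summation, and using the shift-invariance of the period-$N$ index set.

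Next I would evaluate $\mathrm{DFT}\{v^{*}(-n)\}$ directly from the definition. Writing
\begin{equation}
\sum_{n=0}^{N-1} v^{*}(-n)\,W_N^{-kn}=\sum_{m=0}^{N-1} v^{*}(m)\,W_N^{km}=\left(\sum_{m=0}^{N-1} v(m)\,W_N^{-km}\right)^{*}=V^{*}(k),
\end{equation}
where in the first step I substitute $m=-n$ and use the period-$N$ invariance of the summation range, and in the second step I pull the complex conjugate outside the sum. Combining this with the previous display yields $\mathrm{DFT}\{\theta_{uv}(\tau)\}=U(k)V^{*}(k)$, as required.

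The only mildly subtle point is the index substitution $m=-n \pmod N$; one must check that as $n$ runs over a complete residue system modulo $N$ so does $-n$, which is immediate, and that both $v^{*}(-n)$ and $W_N^{-kn}$ are themselves period-$N$ in $n$, so the shifted sum equals the original. Apart from that bookkeeping, the proof is essentially a one-line application of the convolution theorem, so I do not anticipate any real obstacle.
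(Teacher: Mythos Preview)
Your argument is correct. The paper does not actually supply a proof of this lemma: it introduces it with ``we then immediately have'' and treats the identity as a standard DFT fact, so there is no paper proof to compare against; your convolution-theorem-plus-reindexing derivation is exactly the standard justification and goes beyond what the paper spells out.
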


Since the AC function of $\{u(n)\}$ can be expressed as
$\theta_{uu}(n)=u(n)\varoast u^{*}(-n)$, its DFT is given by
$\Theta_{uu}(k)=|U(k)|^{2}$. Therefore, it is straightforward to
show
\begin{corollary}
Sequence $\{u(n)\}$ is a perfect sequence if and only if
$|U(k)|^{2}$ is constant for all $k$. \label{perfect_seq}
\end{corollary}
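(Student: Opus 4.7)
The plan is to specialize Lemma \ref{CC_DFT} to the case $v=u$, which identifies $\text{DFT}\{\theta_{uu}(\tau)\}$ with $|U(k)|^{2}$, and then invoke the uniqueness of the DFT/IDFT pair to translate the perfect-sequence condition on $\theta_{uu}$ into a condition on $|U(k)|^{2}$. Both directions collapse to the elementary fact that, for a period-$N$ sequence, a scaled Kronecker delta in the time domain corresponds to a constant in the transform domain.

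For the forward direction I would take the DFT of $\theta_{uu}(\tau) = \theta_{uu}(0)\delta(\tau)$. Since the DFT of the period-$N$ Kronecker delta is identically $1$, the right-hand side becomes the constant $\theta_{uu}(0)$, while Lemma \ref{CC_DFT} identifies the left-hand side with $|U(k)|^{2}$; hence $|U(k)|^{2} = \theta_{uu}(0)$ for every $k$. For the reverse direction I would assume $|U(k)|^{2}$ equals some constant $c$ independent of $k$, apply the IDFT, and use the standard constant-to-delta DFT pair to conclude that $\theta_{uu}(\tau)$ vanishes whenever $\tau \neq 0$, which is precisely the defining property of a perfect sequence; evaluating at $\tau=0$ then pins down $c = \theta_{uu}(0)$.

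There is no genuine obstacle in this argument; the only care needed is to keep the DFT normalization consistent between Lemma \ref{CC_DFT} and the inversion step so that no stray factor of $N$ is mishandled. Because the same scalar multiplies both sides of the identity, the equivalence is unaffected, and the proof reduces to a one-line invocation of Lemma \ref{CC_DFT} together with the standard delta/constant DFT pair.
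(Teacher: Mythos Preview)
Your proposal is correct and follows essentially the same approach as the paper: the paper notes that $\Theta_{uu}(k)=|U(k)|^{2}$ via \textit{Lemma~\ref{CC_DFT}} and then declares the corollary ``straightforward to show,'' which is exactly the delta/constant DFT-pair argument you spell out.
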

Based on the above properties, we can easily prove that
\begin{lemma}
The AC and CC functions of a set of sequences are invariant (up to
a scaling factor) to filtering if the filtering sequence $v$ is a
perfect sequence. \label{mod_inv}
\end{lemma}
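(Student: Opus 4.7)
The plan is to convert the statement into the transform domain, where filtering becomes pointwise multiplication, and then exploit the flat magnitude spectrum guaranteed by the perfect sequence hypothesis.

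First, I would fix two sequences $u_1, u_2$ from the set, and form their filtered versions $u_{1,v}$ and $u_{2,v}$ as in Definition \ref{def:mod}. By that definition, $u_{i,v}(n) = \theta_{u_i v}(n)$, so applying Lemma \ref{CC_DFT} to each filtering operation yields
\begin{equation}
U_{i,v}(k) \;=\; U_i(k)\, V^{*}(k), \qquad i=1,2.
\end{equation}
Next, I would compute the DFT of the cross-correlation of the two filtered sequences. Invoking Lemma \ref{CC_DFT} a second time and substituting the expressions above gives
\begin{equation}
\Theta_{u_{1,v}\, u_{2,v}}(k) \;=\; U_{1,v}(k)\, U_{2,v}^{*}(k) \;=\; U_1(k)\, U_2^{*}(k)\, |V(k)|^{2}.
\end{equation}

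The crucial step is then to invoke the perfect sequence assumption. By Corollary \ref{perfect_seq}, $|V(k)|^{2}$ is a constant, say $P$, independent of $k$. Hence
\begin{equation}
\Theta_{u_{1,v}\, u_{2,v}}(k) \;=\; P\cdot U_1(k)\, U_2^{*}(k) \;=\; P\cdot \Theta_{u_1 u_2}(k).
\end{equation}
Taking the IDFT of both sides yields $\theta_{u_{1,v}\, u_{2,v}}(\tau) = P\cdot \theta_{u_1 u_2}(\tau)$, which is precisely the desired invariance up to the scalar $P$. Specializing $u_1=u_2$ handles the AC case and $u_1\neq u_2$ handles the CC case, so the claim holds uniformly over the entire set.

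There is no genuine obstacle; the statement is essentially a one-line consequence of the convolution theorem once one recognizes that perfection of $v$ is exactly the assertion that $|V(k)|^2$ is constant. The only thing worth being careful about is bookkeeping of complex conjugates in the definition of filtering versus cross-correlation, so that the factor $V^{*}(k)$ from $u_{1,v}$ and the factor $V(k)$ coming from $u_{2,v}^{*}$ combine cleanly into $|V(k)|^2$ rather than into a $k$-dependent phase.
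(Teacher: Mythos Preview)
Your proof is correct and is precisely the argument the paper has in mind: the text simply says the lemma follows ``based on the above properties,'' namely \textit{Lemma \ref{CC_DFT}} and \textit{Corollary \ref{perfect_seq}}, and your transform-domain computation is exactly that. The conjugate bookkeeping you flagged is handled correctly.
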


As will become clear in subsequent sections that this lemma makes
the filtering operator very useful in transforming a sequence set
into one with entries of the sequences taken from a desired constellation
while maintaining the correlation properties.


\subsection{Basic Constructions}
\begin{definition}
A sequence $\{u(n)\}$ in an $(N,K)$ sequence set is said to have a
{\it subperiod} of $J$, where $J|N$, if it is also periodic with
period $J<N$, i.e., $u(n)=u(\ell J+n)$, for $0 \leq \ell <N/J$ and
$0\leq n < J$.
\end{definition}
Now note that \textit{Lemma \ref{CC_DFT}} implies
\begin{IEEEeqnarray}{rCl}
\theta_{uv}(\tau)
=\sum_{k=0}^{N-1}\Theta_{uv}(k)e^{\frac{j2\pi\tau k}{N}}
=\sum_{k=0}^{N-1}U(k)V^*(k)e^{\frac{j2\pi\tau k}{N}}~~~~
\label{transformCC}
\end{IEEEeqnarray}
where $\Theta_{uv}(k)=$DFT$\{\theta_{uv}(\tau)\}$. When $\{U(k)\}$
and $\{V(k)\}$, regarded as $N$-dimensional vectors, are
orthogonal, we have
\begin{eqnarray}
  \theta_{uv}(0)=\sum_{k=0}^{N-1}\Theta_{uv}(k)
  =\sum_{k=0}^{N-1}{U}(k)V^*(k) =0. \label{equation:cx_at_0}
\end{eqnarray}
If the sequence $\{\Theta_{uv}(k)\}$ has a subperiod of
$J=\frac{N}{m}$, then
\begin{IEEEeqnarray}{rCl}
\theta_{uv}(\tau)&=&\sum_{k=0}^{J-1}\Theta_{uv}(k)e^{\frac{j2\pi\tau k}{N}}
+\sum_{k=J}^{2J-1}\Theta_{uv}(k)e^{\frac{j2\pi\tau k}{N}}+ \cdots 
+\sum_{k=(m-1)J}^{N-1}\Theta_{uv}(k)e^{\frac{j2\pi\tau k}{N}}
\nonumber\\
&=&\sum_{k=0}^{J-1}\Theta_{uv}(k)e^{\frac{j2\pi\tau k}{N}}
\left(1+e^{\frac{j2\pi\tau}{m}}+\cdots+e^{\frac{j2\pi\tau(m-1)}{m}}\right)\notag \label{equation:CC_time_freq}
\end{IEEEeqnarray}
The identity
\begin{equation}
1+\alpha+\alpha^2+\cdots+\alpha^{m-1}=0,~\forall~ \alpha=W_{m}^{\tau},~|\tau|_m\neq 0
\label{eqn:root}
\end{equation}
then gives
\begin{lemma}
The CC function $\theta_{uv}(\tau)$ of two period-$N$ sequences
$\{u(n)\}$ and $\{v(n)\}$ is identical zero $\forall|\tau|_N\leq
T$ if the associated DFT vectors $\{U(k)\}$ and $\{V(k)\}$ are
orthogonal and their Hadamard product, $\{U(k)V^*(k)\}$, has a
subperiod of $J={N}/({T+1})$, where $T$ is a positive integer.
\end{lemma}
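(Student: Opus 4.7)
The plan is to combine the inverse-DFT representation of $\theta_{uv}(\tau)$ with the subperiodicity of $\{\Theta_{uv}(k)\}$ to reduce the full length-$N$ sum to a product of a shorter sum over one subperiod and a geometric sum of roots of unity, then invoke the orthogonality hypothesis to close the argument at $\tau=0$.

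First, by \textit{Lemma \ref{CC_DFT}}, $\Theta_{uv}(k)=U(k)V^*(k)$, and inverting the DFT gives the identity already displayed in the excerpt, namely $\theta_{uv}(\tau)=\sum_{k=0}^{N-1}\Theta_{uv}(k)e^{j2\pi\tau k/N}$. Setting $m=T+1$ and $J=N/m$, I partition the index set $\{0,1,\dots,N-1\}$ into $m$ consecutive blocks of length $J$. Using the subperiodicity $\Theta_{uv}(\ell J+k)=\Theta_{uv}(k)$ for all $0\le \ell<m$ and $0\le k<J$, I can write
\begin{equation*}
\theta_{uv}(\tau)
=\sum_{k=0}^{J-1}\Theta_{uv}(k)e^{j2\pi\tau k/N}\sum_{\ell=0}^{m-1}e^{j2\pi\tau\ell/m},
\end{equation*}
which is exactly the factorization the authors foreshadowed just before (\ref{eqn:root}).

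Next I would apply the root-of-unity identity (\ref{eqn:root}): the inner sum $\sum_{\ell=0}^{m-1}e^{j2\pi\tau\ell/m}$ vanishes whenever $\tau\not\equiv 0\pmod{m}$ and equals $m$ otherwise. For any $\tau$ with $1\le |\tau|_N\le T=m-1$, the residue $|\tau|_m$ equals $|\tau|_N$ and is nonzero, so the inner sum is zero and hence $\theta_{uv}(\tau)=0$.

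The remaining case is $\tau=0$, which must be treated separately because the geometric sum no longer vanishes. Here I invoke the orthogonality hypothesis: by (\ref{equation:cx_at_0}), $\sum_{k=0}^{N-1}U(k)V^*(k)=0$, and by subperiodicity this sum equals $m\sum_{k=0}^{J-1}\Theta_{uv}(k)$, so the single-period sum $\sum_{k=0}^{J-1}\Theta_{uv}(k)$ is itself zero, forcing $\theta_{uv}(0)=0$. Combining the two cases covers the full range $|\tau|_N\le T$. The only delicate point, and the one I would flag as the main obstacle, is ensuring that the separate orthogonality condition is genuinely needed (it is: subperiodicity alone fixes $\theta_{uv}(\tau)$ at nonzero multiples of $J$ but leaves $\theta_{uv}(0)$ free, which is why the lemma lists orthogonality as a second hypothesis).
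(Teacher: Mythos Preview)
Your argument is correct and mirrors the paper's own derivation essentially step for step: the paper establishes $\theta_{uv}(0)=0$ from orthogonality via (\ref{equation:cx_at_0}), then uses the subperiod-$J$ block decomposition of the IDFT sum together with the root-of-unity identity (\ref{eqn:root}) to kill $\theta_{uv}(\tau)$ for $|\tau|_m\neq 0$, which is exactly what you do (with the minor cosmetic difference that you treat $\tau=0$ last rather than first).
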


The recursive Kronecker construction of the Sylvester Hadamard
matrices (\ref{eq:recur_gen}) gives at least two sets of row
vectors (i.e., upper- and lower-half parts of ${\bf H}_{2^n}$)
that satisfy both the orthogonality and subperiodicity
requirements. This property still holds when we replace Sylvester
Hadamard matrices by other classes of Hadamard matrices produced
by a recursive Kronecker construction similar to
(\ref{eq:recur_gen}). Furthermore, as elements of a Hadamard
matrix have constant modulus, the AC of all sequences derived by
taking IDFT on rows of a Hadamard matrix is $0$ for all nonzero
correlation lags by \textit{Corollary \ref{perfect_seq}}. These
two observations suggest that ZCZ families can be obtained by
using proper subsets of row vectors from a Hadamard matrix. To
have a precise definition of ``proper subsets," we need
\begin{definition}
A regular $p$th-order $M$-partition on an $N\times N$ matrix
$\mathbf{H}$, where $N=M^n$, is the set of $m=N/K=M^p$ $K \times
N$ submatrices, each is formed by non-overlapping $K=M^{n-p}$
consecutive rows of $\mathbf{H}$. \label{def:m-partition}
\end{definition}
Proper subsets of row vectors that generate ZCZ families are
obtained by performing $p$th-order $M$-partition on the $n$th
Kronecker power of a Hadamard matrix, i.e.,
\begin{lemma}
Let $\mathbf{U}$ be a Hadamard matrix of order $M$ and
$\mathbf{H}$ be the Hadamard matrix of order $N$ generated by the
$n$th Kronecker power of $\mathbf{U}$, i.e.,
\begin{eqnarray}
  \mathbf{H}
  =[{\mathbf h}^T_0, {\mathbf h}^T_1,\cdots,{\mathbf
     h}^T_{N-1}]^T
  =\otimes^n \mathbf{U},
\label{eq:Hadarmard_Kron_power}
\end{eqnarray}
where $N=M^n$, $n\geq2$, and ${\mathbf h}_\ell$ is the $\ell$th
row\footnote{For convenience, all the column, row, and vector
elements' indices start with 0 instead of 1.} of $\mathbf{H}$. We
perform a regular $p$th-order $M$-partition on $\mathbf{H}$ to
obtain the $m=M^p$ submatrices
\begin{eqnarray}
\widetilde{{\mathbf H}}_i=[{\mathbf h}^T_{iK}, \cdots, {\mathbf
h}^T_{(i+1)K-1}]^T, ~i=0,1,\cdots,m-1.
\end{eqnarray}
Then, for each $i$, the set of $K$ length-$N$ sequences
$\mathbf{A}_i \stackrel{def}{=}\{ A_{i,0},A_{i,1},\cdots,
A_{i,K-1}\}$, where $A_{i,j}=\text{IDFT}\{\mathbf{h}_{iK+j}\}$, is
an $(N,K,m-1)$ ZCZ sequence family that achieves the upper bound
(\ref{eq:bound}). Furthermore, all member sequences in the family
are perfect sequences.
\label{partition_fundamental}
\end{lemma}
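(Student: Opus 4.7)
The plan is to verify the three assertions in turn: (i) each $A_{i,j}$ is a perfect sequence, (ii) any pair of distinct sequences drawn from the same family $\mathbf{A}_i$ satisfies the ZCZ cross-correlation condition for $|\tau|_N\leq m-1$, and (iii) the resulting parameters saturate the bound (\ref{eq:bound}). The main tools are Lemma \ref{Hada_Hada_Hada}, Corollary \ref{perfect_seq}, and the preceding subperiod lemma.

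First, by a repeated application of Lemma \ref{Hada_Hada_Hada}, the matrix $\mathbf{H}=\otimes^n\mathbf{U}$ is itself a Hadamard matrix of order $N=M^n$, so each entry has constant modulus $P^{n/2}$. Hence $|h_\ell(k)|^2$ is independent of $k$, and Corollary \ref{perfect_seq} immediately implies that each $A_{i,j}=\text{IDFT}\{\mathbf{h}_{iK+j}\}$ is a perfect sequence. This simultaneously delivers (i) and the autocorrelation half of the ZCZ definition.

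Next I would address cross correlation. Pick two distinct sequences $A_{i,j_1},A_{i,j_2}\in\mathbf{A}_i$. Their DFT vectors are distinct rows of $\mathbf{H}$, hence orthogonal, so (\ref{equation:cx_at_0}) gives $\theta_{A_{i,j_1}A_{i,j_2}}(0)=0$. For lags $1\le\tau\le m-1$ I would apply the preceding subperiod lemma, which reduces the task to showing that the Hadamard product $\{h_{iK+j_1}(k)\,h_{iK+j_2}^*(k)\}_{k=0}^{N-1}$ has subperiod $J=N/m=K$. To see this, expand every index in base $M$, $\ell=\sum_{s=0}^{n-1}\ell_{s}M^{n-1-s}$ and similarly for $k$; the standard Kronecker identity then gives $H_{\ell,k}=\prod_{s=0}^{n-1}U_{\ell_s,k_s}$. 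Since $iK\le iK+j_r<(i+1)K$ with $K=M^{n-p}$, the leading $p$ base-$M$ digits of $iK+j_1$ and $iK+j_2$ both coincide with the digits of $i$, so those factors contribute $|U_{i_s,k_s}|^2=P$ and cancel out of the Hadamard product. What remains,
\begin{equation*}
H_{iK+j_1,k}\,H_{iK+j_2,k}^{*}=P^{p}\prod_{s=p}^{n-1}U_{a_s,k_s}\,U_{b_s,k_s}^{*},
\end{equation*}
depends on $k$ only through the trailing $n-p$ digits $k_p,\dots,k_{n-1}$. Because adding $K=M^{n-p}$ to $k$ modulo $N$ alters only the leading $p$ digits of $k$, the product is invariant, confirming the subperiod $K=N/m$. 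The preceding lemma then yields $\theta_{A_{i,j_1}A_{i,j_2}}(\tau)=0$ for $|\tau|_N\le m-1$, so $\mathbf{A}_i$ is an $(N,K,m-1)$ ZCZ family; optimality follows from $K(T+1)=M^{n-p}\cdot M^{p}=N$.

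The only nontrivial step is the subperiod verification: once the digit decomposition of $\otimes^n\mathbf{U}$ is in place, the alignment of leading digits forced by the regular $p$th-order $M$-partition does all the work, and the remaining bookkeeping is routine. I would therefore spend most of the write-up carefully fixing the indexing convention (digits from most to least significant, indices starting at $0$) so that "partition into consecutive blocks of $K$ rows" cleanly translates into "fix the leading $p$ digits".
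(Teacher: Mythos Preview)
Your proof is correct and follows essentially the same route as the paper's. Where you unpack the Kronecker power via the base-$M$ digit decomposition $H_{\ell,k}=\prod_{s}U_{\ell_s,k_s}$, the paper observes equivalently that $\otimes^{n}\mathbf{U}=(\otimes^{p}\mathbf{U})\otimes(\otimes^{n-p}\mathbf{U})$, so each submatrix has the block form $\widetilde{\mathbf{H}}_i=[a_{i,0}\mathbf{B},\ldots,a_{i,m-1}\mathbf{B}]$ with $\mathbf{B}=\otimes^{n-p}\mathbf{U}$ and unimodular scalars $a_{i,j}$; both formulations yield the subperiod-$K$ property of the Hadamard product and then invoke the preceding subperiod lemma.
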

\begin{proof}
The matrix $\mathbf{H}$ can be expressed in the stacked form,
${\mathbf H}=\left[\widetilde{{\mathbf H}}_0^T, \widetilde{
{\mathbf H}}^T_1,\cdots,\widetilde{{\mathbf H}}^T_{m-1}\right]^T$,
where the submatrix $\widetilde{{\mathbf H}}_i$ is of the form
\[
\left[a_{i,0}\mathbf{B}, a_{i,1}\mathbf{B},\cdots,
a_{i,m-1}\mathbf{B}\right]
\]
where $a_{i,j}$'s have unit magnitudes and ${\bf B}=\otimes^{n-p}
\mathbf{U}$. It follows immediately that the Hadamard products of
two distinct rows of $\widetilde{{\mathbf H}}_i$ has a period of
$M^{n-p}=K$.
\end{proof}
The above construction gives ZCZ sequences of length $M^n$, $n\geq2$.
That the upper bound (\ref{eq:bound}) is achieved is a result
of our partition method described by {\it Definition
\ref{def:m-partition}}. The sequence length constraint can be
relaxed by using Kronecker construction of Hadamard matrices of
different orders. Using \textit{Lemma \ref{Hada_Hada_Hada}} and an
argument similar to that in deriving the above lemma, we obtain
\begin{theorem}
Let $\mathbf{H}$ be the $N \times N$ Hadamard matrix
\begin{IEEEeqnarray}{rCl}
  \mathbf{H}=[{\mathbf h}^T_0, {\mathbf h}^T_1,\cdots,{\mathbf
     h}^T_{N-1}]^T\stackrel{def}{=}\mathbf{U}_{n-1}\otimes\cdots\otimes\mathbf{U}_0
  \label{eq:GeneralizedHadarmard}
\end{IEEEeqnarray}
where $\mathbf{U}_k$, $k=0,1,\cdots,n-1$, are $M_k\times M_k$
(not necessarily distinct) Hadamard matrices and $N=\prod_{k=0}^{n-1}M_k$,
$n\geq2$. Partition $\mathbf{H}$ into $m=\frac{N}{K}$
submatrices of size $K\times N$,
\begin{eqnarray}
\widetilde{{\mathbf H}}_i=[{\mathbf h}^T_{iK}, \cdots, {\mathbf
h}^T_{(i+1)K-1}]^T, ~i=0,1,\cdots,m-1,
\label{eq:submtx}
\end{eqnarray}
each formed by non-overlapping $K=\prod_{k=0}^{n-p-1}M_k$
consecutive rows of $\mathbf{H}$ with $p>0$. Then, for each $i$,
the set of $K$ period-$N$ sequences $\mathbf{A}_i
\stackrel{def}{=}\{ A_{i,0},A_{i,1},\cdots, A_{i,K-1}\}$, where
$A_{i,j}=\text{IDFT}\{\mathbf{h}_{iK+j}\}$, is an $(N,K,m-1)$ ZCZ
sequence family that achieves the upper bound
(\ref{eq:bound})\footnote{Technically, the theorem is also valid
for $p=0$, as the resulting set has a ZCZ width 0. We will
implicitly ignore this trivial case and assume $p>0$ in the
subsequent discussion.}. \label{trans_fundamental}
\end{theorem}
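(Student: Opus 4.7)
The plan is to mimic the proof of Lemma \ref{partition_fundamental}, replacing the role played by the Kronecker power $\otimes^n \mathbf{U}$ by the mixed Kronecker product $\mathbf{U}_{n-1}\otimes\cdots\otimes\mathbf{U}_0$, and invoking Lemma \ref{Hada_Hada_Hada} to guarantee that the resulting matrix is still Hadamard. First I would verify the AC property: by iterated application of Lemma \ref{Hada_Hada_Hada}, $\mathbf{H}$ is an order-$N$ Hadamard matrix, so every row $\mathbf{h}_{\ell}$ has entries of a common magnitude, and Corollary \ref{perfect_seq} then implies that $A_{i,j} = \text{IDFT}\{\mathbf{h}_{iK+j}\}$ is a perfect sequence, handling $\theta_{A_{i,j}A_{i,j}}(\tau) = \theta_{A_{i,j}A_{i,j}}(0)\delta(\tau)$.

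For the CC property I would exploit the mixed-radix structure of the Kronecker product. Writing row and column indices $\ell,c\in\{0,\ldots,N-1\}$ in the mixed radix $(M_{n-1},\ldots,M_0)$, the entry $H(\ell,c)$ factors as $\prod_{k=0}^{n-1} U_k(\ell_k,c_k)$. Because $K=\prod_{k=0}^{n-p-1} M_k$, each row index $\ell=iK+j$ inside $\widetilde{\mathbf{H}}_i$ has its low digits $(\ell_{n-p-1},\ldots,\ell_0)$ parametrized by $j$ and its high digits $(\ell_{n-1},\ldots,\ell_{n-p})$ fixed by $i$. This makes $\mathbf{h}_{iK+j}$ expressible in block form as $[a_{i,0}\mathbf{b}_j,\ldots,a_{i,m-1}\mathbf{b}_j]$, where $\mathbf{b}_j$ is the $j$th row of $\mathbf{B}=\mathbf{U}_{n-p-1}\otimes\cdots\otimes\mathbf{U}_0$ (itself Hadamard of order $K$ by Lemma \ref{Hada_Hada_Hada}) and $\{a_{i,s}\}_{s}$ is the $i$th row of $\mathbf{U}_{n-1}\otimes\cdots\otimes\mathbf{U}_{n-p}$.

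Then, for two distinct rows $j\neq j'$ inside $\widetilde{\mathbf{H}}_i$, the pointwise product $\mathbf{h}_{iK+j}\odot\mathbf{h}_{iK+j'}^{*}$ has entries of the form $|a_{i,s}|^{2}\, b_{j,t}b_{j',t}^{*}$. Since $|a_{i,s}|^{2}$ is constant in $s$ (all entries of a Hadamard matrix share one modulus), the pointwise product depends only on the low digit $t$, so the length-$N$ sequence $\{U(k)V^{*}(k)\}$ has subperiod $J=K=N/m$. Orthogonality of the two distinct Hadamard rows gives $\theta_{A_{i,j}A_{i,j'}}(0)=0$ via (\ref{equation:cx_at_0}), so the subperiod-plus-orthogonality lemma that precedes the theorem applies and yields $\theta_{A_{i,j}A_{i,j'}}(\tau)=0$ for $|\tau|_N\le m-1$. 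Finally, since $Km=N$, the bound (\ref{eq:bound}) is met with equality.

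The only real departure from Lemma \ref{partition_fundamental} is the block decomposition step, and I expect the chief difficulty to be purely bookkeeping: ordering the mixed-radix digits so that the factor $\mathbf{B}$ actually repeats $m$ times across a row. The Kronecker order in (\ref{eq:GeneralizedHadarmard}) puts $\mathbf{U}_0$ rightmost but the partition (\ref{eq:submtx}) groups rows by the leading block $\mathbf{U}_{n-1}\otimes\cdots\otimes\mathbf{U}_{n-p}$, so one must be careful to line up ``high'' and ``low'' digits consistently on both the row and column sides; once that is done, everything else is a direct quote of the earlier arguments.
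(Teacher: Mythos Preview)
Your proposal is correct and follows precisely the route the paper takes: the paper proves Theorem~\ref{trans_fundamental} by invoking Lemma~\ref{Hada_Hada_Hada} and ``an argument similar to that in deriving'' Lemma~\ref{partition_fundamental}, whose proof is exactly the block decomposition $\widetilde{\mathbf{H}}_i=[a_{i,0}\mathbf{B},\ldots,a_{i,m-1}\mathbf{B}]$ you describe (with $\mathbf{B}=\mathbf{U}_{n-p-1}\otimes\cdots\otimes\mathbf{U}_0$ replacing $\otimes^{n-p}\mathbf{U}$), from which subperiodicity of the Hadamard product and hence the ZCZ property follow. Your mixed-radix bookkeeping is more explicit than the paper's terse treatment, but the substance is identical.
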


Note that the recursive generation of Hadamard matrices defined by
(\ref{eq:recur_gen}) and (\ref{eq:Hadarmard_Kron_power}) are
special cases of (\ref{eq:GeneralizedHadarmard}), i.e., the above
theorem generalize \textit{Theorems 1} and \textit{2} of
\cite{isit2009}.

\subsection{Polyphase ZCZ Sequences}
The ZCZ sequences generated by the methods described above are not
necessary of constant modulus but can be converted into polyphase
sequences without altering the desired AC and CC properties by a
proper filtering process; see \textit{Definition \ref{def:mod}}
and \textit{Lemma \ref{mod_inv}}. To find the filtering perfect
sequences we need the following two properties.
\begin{lemma}
\cite{poly_perf_prop} Let $U$ be a length-$N$ polyphase perfect
sequence with entries drawn from the $N$-PSK constellation. Then
both IDFT$\{U\}$ and DFT$\{U\}$ are polyphase perfect sequences.
\label{poly_DFT_poly}
\end{lemma}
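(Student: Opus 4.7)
The plan is to verify the two defining attributes of a polyphase perfect sequence—constant modulus and impulse-like autocorrelation—separately, both for $V=\text{DFT}\{U\}$ and for $W=\text{IDFT}\{U\}$.

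For the constant-modulus claim, I would invoke Corollary \ref{perfect_seq} on the perfect sequence $U$: since $\theta_{UU}(\tau)=N\delta(\tau)$ (using $|U(n)|\equiv 1$), its DFT satisfies $|V(k)|^{2}=\Theta_{UU}(k)\equiv N$, so $|V(k)|=\sqrt{N}$ for every $k$. Comparing the DFT and IDFT formulas directly gives $W(n)=V((-n)\bmod N)/N$, so $|W(n)|\equiv 1/\sqrt{N}$ is likewise constant.

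For the ideal AC of $V$, I would exchange the order of summation in $\text{DFT}\{V\}(m)=\sum_{k}V(k)W_{N}^{-km}$ to obtain $N\cdot U((-m)\bmod N)$. Because $U$ is $N$-PSK, $|\text{DFT}\{V\}(m)|\equiv N$ is constant, so Corollary \ref{perfect_seq} read in the reverse direction tells us $V$ itself is perfect. The analogous conclusion for $W$ follows from the identity displayed in the preceding paragraph.

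The main obstacle is to upgrade ``constant modulus'' to the strict polyphase property, i.e., to show that each normalized entry $V(k)/\sqrt{N}$ is a root of unity of some order. My plan is to exploit $V(k)\in\mathbb{Z}[W_{N}]$ together with the Galois action on $\mathbb{Q}(W_{N})$: for every integer $a$ coprime to $N$, the automorphism $\sigma_{a}:W_{N}\mapsto W_{N}^{a}$ sends $U$ to another $N$-PSK sequence $U^{\sigma_{a}}$ which is again perfect, because $\sigma_{a}$ applied coefficient-wise to $\theta_{UU}(\tau)=N\delta(\tau)$ returns the same identity. Applying the magnitude step of the first paragraph to $U^{\sigma_{a}}$ then gives $|\sigma_{a}(V(k))|=\sqrt{N}$, so every Galois conjugate of $V(k)$ lies on the circle of radius $\sqrt{N}$. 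A Kronecker-type theorem on algebraic integers whose conjugates are confined to a fixed circle then forces $V(k)/\sqrt{N}$ to be a root of unity, yielding the polyphase conclusion for both $V$ and $W$. Rigorously carrying out this algebraic-integer step inside $\mathbb{Z}[W_{N},\sqrt{N}]$—in particular verifying that $\sqrt{N}$ divides $V(k)$ in the appropriate ring—is the technically delicate part of the argument, and is the content of the cited reference \cite{poly_perf_prop}.
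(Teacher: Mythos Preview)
The paper itself does not prove this lemma; it simply quotes it from \cite{poly_perf_prop}. So there is no in-paper argument to compare against, and your task reduces to producing a self-contained proof or, as you do, pointing to where the real work lies.

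Your treatment of the ``perfect'' and ``constant-modulus'' parts is correct and is exactly how one would argue from \textit{Corollary \ref{perfect_seq}}: $|V(k)|^{2}=\Theta_{UU}(k)\equiv N$ gives the modulus, and $\mathrm{DFT}\{V\}(m)=N\,U((-m)\bmod N)$ gives the ideal AC by the same corollary in reverse. The reflection identity linking $W$ to $V$ is also fine.

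The gap is in the ``polyphase'' step. Your Galois argument correctly shows that every conjugate $\sigma_{a}(V(k))$ has absolute value $\sqrt{N}$ (indeed $\sigma_{a}(V(k))$ is the DFT of the perfect $N$-PSK sequence $U^{\sigma_{a}}$ evaluated at $ak$). But Kronecker's theorem applies only to algebraic \emph{integers}, and $V(k)/\sqrt{N}$ is not a priori one; having all archimedean absolute values equal to $1$ does not by itself force a mere algebraic number to be a root of unity (e.g.\ $(3+4i)/5$). You acknowledge this and then assert that verifying $\sqrt{N}\mid V(k)$ in a suitable ring ``is the content of the cited reference.'' That attribution is almost certainly wrong: \cite{poly_perf_prop} is a constructive synthesis paper in the communications literature, and the expected argument there exploits the explicit structure of length-$N$, $N$-PSK perfect sequences (Frank/Chu type constructions and their known DFT closure), not algebraic number theory. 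So the plan as written leaves the essential polyphase claim unproved and misidentifies where the missing idea comes from. If you want to pursue the Galois route, you would need an independent argument that $V(k)/\sqrt{N}$ is an algebraic integer---for instance via the ramification of the primes dividing $N$ in $\mathbb{Z}[W_{N}]$---which is substantially harder than the elementary transform-domain facts you have already established.
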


\begin{lemma}
\cite{poly_perf_prop_2} Let $L$ be a natural number and $N=L^2$.
Define the length-$N$ polyphase sequence $\{u(k)\}$ by
 \begin{equation}
  u(k_1L+k_2)=W_{L}^{\beta(k_2)k_1+r(k_2)},~0\leq k_1, k_2<L,
  \label{eq:Mow_poly_perf}
 \end{equation}
where $\{\beta(k_2): k_2=0,1,\cdots, L-1\}$ is a permutation of
$\{0,1,\cdots,L-1\}$, and $r(k_2)$ is a rational number depending
on $k_2$. Then the sequences, $\{u(k)\}$,
 \begin{IEEEeqnarray}{rCl}
  \{e^{j\theta_{k_2}}u(k_1L+k_2):0\leq\theta_{k_2}<2\pi,~
    0\leq k_1, k_2<L\}~~~~
 \end{IEEEeqnarray}
and
 \begin{IEEEeqnarray}{rCl}
\{W_{L}^{\ell k_1}u(k_1L+k_2):0\leq k_1, k_2<L\},~ \text{for any
integer $\ell$,}\notag\\
  \label{eq:Mow_poly_perf_2}
 \end{IEEEeqnarray}
are all polyphase perfect sequences. \label{poly_DFT_poly2}
\end{lemma}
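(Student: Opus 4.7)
The plan is to establish part 1 by a direct computation of the autocorrelation, and then obtain parts 2 and 3 as immediate consequences of the same calculation. Since we have Corollary \ref{perfect_seq} available, one could in principle argue via $|U(k)|^2$, but the structure of (\ref{eq:Mow_poly_perf}) makes a time-domain computation cleaner.

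For part 1, I would decompose both the time index and the shift in base $L$: write $n = n_1 L + n_2$ and $\tau = \tau_1 L + \tau_2$ with $0 \le n_1, n_2, \tau_1, \tau_2 < L$, and read off that the cyclic shift takes the form $(n - \tau) \bmod N = n_1' L + n_2'$ where $n_2' = (n_2 - \tau_2) \bmod L$ and $n_1' = n_1 - \tau_1 - \epsilon$ with a borrow $\epsilon \in \{0,1\}$ equal to $1$ exactly when $n_2 < \tau_2$. Plugging into (\ref{eq:Mow_poly_perf}) gives
\begin{equation*}
u(n) u^{*}(n - \tau) = W_L^{[\beta(n_2) - \beta(n_2')] n_1 \,+\, c(n_2, \tau)},
\end{equation*}
where $c(n_2, \tau)$ is independent of $n_1$. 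Summing over $n_1 = 0, \ldots, L-1$ invokes the geometric-sum identity (\ref{eqn:root}) and yields $0$ unless $\beta(n_2) \equiv \beta(n_2') \pmod L$; since $\beta$ is a permutation this forces $n_2 = n_2'$, which can only occur when $\tau_2 = 0$ (the branch $n_2 < \tau_2$ would need $\tau_2 = L$, contradicting $\tau_2 < L$). Hence $\theta_{uu}(\tau) = 0$ whenever $\tau_2 \neq 0$.

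For the remaining sub-case $\tau_2 = 0$, $\tau_1 \neq 0$, we get $n_2' = n_2$, $\epsilon = 0$, and the inner $n_1$-sum collapses to $L \cdot W_L^{\beta(n_2) \tau_1}$. Summing this over $n_2$, and relabeling $m = \beta(n_2)$ (since $\beta$ is a bijection of $\{0,\ldots,L-1\}$), produces $L \sum_{m=0}^{L-1} W_L^{m \tau_1}$, which vanishes again by (\ref{eqn:root}) because $\tau_1 \neq 0$. Combined with the obvious value $\theta_{uu}(0) = N$, this establishes that $\{u(n)\}$ is perfect.

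Parts 2 and 3 then fall out with almost no extra work. Multiplying by $e^{j\theta_{n_2}}$ only inserts the $n_1$-independent factor $e^{j(\theta_{n_2} - \theta_{n_2'})}$ into the product $u(n) u^{*}(n-\tau)$; this factor passes through the inner $n_1$-sum without changing its vanishing when $n_2 \neq n_2'$, and it equals $1$ when $n_2 = n_2'$, so the argument for part 1 carries over verbatim. For part 3, observe that $W_L^{\ell n_1} u(n_1 L + n_2) = W_L^{[\beta(n_2) + \ell] n_1 + r(n_2)}$ is of the same form as (\ref{eq:Mow_poly_perf}) with permutation $\beta'(n_2) = [\beta(n_2) + \ell] \bmod L$, which is still a permutation of $\{0,\ldots,L-1\}$; applying part 1 to $\beta'$ finishes the proof. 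The main obstacle is purely bookkeeping: tracking the borrow $\epsilon$ and confirming that the wrap-around branch cannot accidentally satisfy $n_2 = n_2'$. Once that case analysis is settled, everything reduces to two applications of the root-of-unity identity (\ref{eqn:root}).
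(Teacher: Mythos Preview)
Your argument is correct. The base-$L$ decomposition of both $n$ and $\tau$, together with the borrow bookkeeping, cleanly isolates the $n_1$-dependence, and the two applications of (\ref{eqn:root}) (first over $n_1$, then over $n_2$ via the bijection $\beta$) are exactly what is needed. Parts 2 and 3 follow as you describe: the phase $e^{j\theta_{n_2}}$ is absorbed into the $n_1$-independent constant, and the shift $\beta\mapsto\beta+\ell\pmod L$ preserves the permutation property.

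There is, however, nothing to compare against in the paper itself: this lemma is quoted from \cite{poly_perf_prop_2} and is not proved in the text. Your self-contained time-domain proof is therefore a genuine addition rather than a paraphrase. One minor remark: the rationality of $r(k_2)$ plays no role in your perfectness argument (it only matters for the ``polyphase'' label), and you might say so explicitly to preempt a reader wondering where that hypothesis is used.
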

Based on the above results, we propose a transform domain
construction of polyphase ZCZ sequences as follows.
\begin{corollary}
Let $\mathbf{u}$ be a length-$N$ perfect sequence of the form
(\ref{eq:Mow_poly_perf}), $N=\prod_{k=0}^{n-1}M_k=L^{2}$ for some
$L$, and $\widetilde{{\mathbf H}}_i$ be the $i$th submatrix defined
by (\ref{eq:GeneralizedHadarmard}) and (\ref{eq:submtx}) using
$M_k$-DFT or $M_k$-IDFT matrices $\mb{U}_k$'s. Then
$\mathbf{C}_i=\big\{$IDFT$\{\mathbf{h}_{iK+n}\}\circ$IDFT$\{\mathbf{u}\}:
0\leq n \leq K-1\big\}$ is an $\left(N,K,\frac{N}{K}-1\right)$
bound-achieving polyphase ZCZ sequence set.
\label{poly_trans}
\end{corollary}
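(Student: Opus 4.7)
The plan is to verify, in order, the ZCZ parameters, the bound-achievement, and the polyphase property, exploiting the fact that $\mathbf{C}_i$ is simply the base ZCZ family of \textit{Theorem \ref{trans_fundamental}} filtered by a single perfect sequence. Applying \textit{Theorem \ref{trans_fundamental}} directly to $\widetilde{\mathbf{H}}_i$ gives $\mathbf{A}_i=\{A_{i,j}=\text{IDFT}\{\mathbf{h}_{iK+j}\}:0\leq j\leq K-1\}$ as an $(N,K,N/K-1)$ ZCZ family that saturates the bound (\ref{eq:bound}). Because $\mathbf{u}$ is polyphase perfect by hypothesis, \textit{Lemma \ref{poly_DFT_poly}} makes $\text{IDFT}\{\mathbf{u}\}$ polyphase perfect as well, and then \textit{Lemma \ref{mod_inv}} guarantees that filtering each $A_{i,j}$ by $\text{IDFT}\{\mathbf{u}\}$ preserves every AC and CC value up to a common scaling. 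Thus $\mathbf{C}_i$ inherits the $(N,K,N/K-1)$ ZCZ parameters and still attains $K(T+1)=N$.

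The substantive step is to show each $C_{i,n}\in\mathbf{C}_i$ is polyphase. Passing to the transform domain gives
\[
\text{DFT}\{C_{i,n}\}(k)=\text{DFT}\{A_{i,n}\}(k)\cdot\text{DFT}\{\text{IDFT}\{\mathbf{u}\}\}^{*}(k)=\mathbf{h}_{iK+n}(k)\,\mathbf{u}^{*}(k).
\]
If this product, viewed as a length-$N$ sequence in $k$, is polyphase perfect, then a second application of \textit{Lemma \ref{poly_DFT_poly}} to its IDFT yields that $C_{i,n}$ is polyphase perfect, which is the desired conclusion. To establish this, I would expand the Kronecker-product character as $\mathbf{h}_{iK+n}(k)=\prod_{j}W_{M_j}^{\pm n_j k_j}$ in mixed-radix coordinates and re-express it under the base-$L$ partition $k=k_1L+k_2$ used by Mow's construction in (\ref{eq:Mow_poly_perf}). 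The goal is to split $\mathbf{h}_{iK+n}(k)$ into a $k_2$-only unit-modulus factor (absorbable into the $e^{j\theta_{k_2}}$ slot of \textit{Lemma \ref{poly_DFT_poly2}}) and a $k_1$-linear factor $W_L^{\ell k_1}$ (matching (\ref{eq:Mow_poly_perf_2})). Both modifications are admissible by \textit{Lemma \ref{poly_DFT_poly2}} and can be composed---the intermediate sequence after applying $W_L^{\ell k_1}$ stays in Mow form with its permutation merely shifted by $\ell$---so the product $\mathbf{h}_{iK+n}(k)\,\mathbf{u}^{*}(k)$ is polyphase perfect.

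The main obstacle is this re-indexing. The hypotheses $N=\prod_k M_k=L^{2}$ do not automatically align the Kronecker factors $\{M_k\}$ with the two size-$L$ coordinates of Mow's construction, so a careful grouping of the $M_k$'s into two subproducts equal to $L$ (and a verification that each factor $W_{M_j}^{\pm n_j k_j}$ lands cleanly in either the $e^{j\theta_{k_2}}$ or the $W_L^{\ell k_1}$ slot) is where the genuine work lies. The ZCZ width, family size, and bound-achievement parts, by contrast, follow almost mechanically by chaining \textit{Theorem \ref{trans_fundamental}} with \textit{Lemmas \ref{mod_inv} and \ref{poly_DFT_poly}}.
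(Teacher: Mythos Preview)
Your proposal is correct and follows essentially the same route as the paper: the ZCZ parameters and bound-achievement are obtained by chaining \textit{Theorem \ref{trans_fundamental}} with \textit{Lemma \ref{mod_inv}}, and the polyphase property is deduced by writing $\text{DFT}\{C_{i,n}\}(k)=h_{iK+n}(k)\,u^*(k)$, verifying this product lies in Mow's class (\ref{eq:Mow_poly_perf})--(\ref{eq:Mow_poly_perf_2}) with entries that are integer powers of $W_N$, and then invoking \textit{Lemmas \ref{poly_DFT_poly}} and \textit{\ref{poly_DFT_poly2}}. The paper simply asserts the factorisation $h_{iK+n}(k_1L+k_2)=e^{j\theta_{k_2}(n)}W_L^{\ell(n)k_1}$ that you correctly flag as the substantive re-indexing step.
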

\begin{proof}
Since the entries in the $n$th row of $\widetilde{{\mathbf H}}_i$
render the general expression
 \begin{equation}
  [\mathbf{H}]_{iK+n,k_1L+k_2}\defeq h_{iK+n} (k_1L+k_2)=
  e^{j\theta_{k_2}(n)}W_{L}^{\ell(n)k_1}\nonumber
 \end{equation}
for $0\leq k_1,k_2<L$, where $\ell(n)\in\mathbb{Z}$ (integers) and
$0\leq\theta_{k_2}(n)<2\pi$, the products $h_{iK+n}(k)u^*(k)$ are
of the forms (\ref{eq:Mow_poly_perf})--(\ref{eq:Mow_poly_perf_2})
and are integer powers of $W_N$. \textit{Lemmas
\ref{poly_DFT_poly}} and \textit{\ref{poly_DFT_poly2}} imply that
the sequence
\begin{IEEEeqnarray}{rCl}
C_{i,n}(k)&=&\text{IDFT}\{h_{iK+n}(k)\}\circ\text{IDFT}\{u(k)\}\nonumber\\
&=& \text{IDFT}\{h_{iK+n}(k)u^*(k)\}\nonumber
\end{IEEEeqnarray}
has polyphase entries. 
Invoking \textit{Theorem \ref{trans_fundamental}} and
\textit{Lemma \ref{mod_inv}}, we conclude that $\{C_{i,n}:0\leq
n<K\}$ is an $(N,K,\frac{N}{K}-1)$ polyphase ZCZ family.
\end{proof}

\begin{remark} (Polyphase constraint and sequence length selection)
\textit{Theorem \ref{trans_fundamental}} provides a general transform
domain approach using Hadamard matrices to construct bound-achieving
sets of arbitrary nonprime length ZCZ sequences. In contrast,
{\it Corollary \ref{poly_trans}} focuses on the generation of polyphase
sequences and can be regarded as an extension of a special case of the
former. The polyphase requirement is satisfied by invoking an additional
filtering operation and the use of special Hadamard matrices; see {\it
Example 1} in the ensuing subsection. As a result, the choice of the
sequence length is limited to perfect squares ($N=L^2$).
\end{remark}
\begin{remark} (Nested structure)
Every $K\times N$ submatrix $\widetilde{{\mathbf H}}_i$ can be
further partitioned into $K/K'=\prod_{k=n-p'}^{n-p-1}M_k$
submatrices of size $K'\times N$, where $p<p'<n$,
$K=\prod_{k=0}^{n-p-1}M_k$, and $K'=\prod_{k=0}^{n-p'-1}M_k$ so
that each submatrix can be used to construct an
$(N,K',\frac{N}{K'}-1)$ ZCZ sequence set $\mathbf{C}_i^j$ with
larger ZCZ width and $\bigcup_{j=0}^{\frac{K}{K'}-1} {\bf
C}_i^j={\bf C}_i$. This partition can be done in a nested manner,
i.e., each subset can be further decomposed to render even smaller
sequence subsets or $\widetilde{{\bf H}}_i$ can be merged with
proper neighboring submatrices to construct a larger set.
\end{remark}
\begin{remark} (Tradeoff between AC and CC)
The identity (\ref{eqn:root}) actually gives a stronger CC
property than what is specified by the ZCZ width; it implies that
the CC values are identically zero except at $\tau=s(T+1)$,
$s\in\mathbb{Z}$. This is still weaker than the constructions
of \cite{Transform} and \cite{Zak_trans} which yield perfect
(zero) CC at all lags. Perfect CC is achieved by requiring that
each transform domain sequence has sparse nonzero elements and
support (set of the nonzero coordinates) disjoint from the supports
of all other transform domain sequences. Nevertheless, their AC
functions are not as good as ours as all the sequences constructed
by our approach are perfect sequences.
\end{remark}
\begin{remark} (Tradeoff between sequence length and alphabet size)
Tsai's approach \cite{Transform} is more flexible in the choice of
sequence length but requires a very large constellation for
elements of the sequences. Our approach, on the other hand,
requires the smallest constellation and is more flexible than
\cite{Zak_trans} in selecting the sequence length $N$.
\end{remark}

We summarize various parameter constraints for our approach,
\cite{Transform}, and \cite{Zak_trans} in Table \ref{tab:trans_comp}.

\renewcommand{\arraystretch}{1.6}
\begin{table}[t]
 \caption{Transform domain-based polyphase ZCZ sequence sets}
 \centering
 \tabcolsep 0.03in
 \label{tab:trans_comp}
 \footnotesize{
 \begin{tabular}{|C{1in}|C{1in}|C{1in}|C{1in}|
                  C{1in}|}
 \hline & Tsai \cite{Transform} & \multicolumn{2}{c|}{Brodzik \cite{Zak_trans}} & \textit{Corollary
 \ref{poly_trans}}\\
  \hline Sequence length $N$ & $n_1n_2$ & \multicolumn{2}{c|}{$L^3$} &
  $\prod_{k=0}^{n-1}M_k=L^2$\\ \hline
 Set size $K$ & $n_2$ & \multicolumn{2}{c|}{$L$} &
  $\prod_{k=0}^{n-p-1}M_k$\vspace{.2em} \\ \hline
 ZCZ width $T$ & $n_1-1$ & $L^2-1$, $L$ prime & $L-1$, $L$ nonprime & $\frac{N}{K}-1$ 
  \\ \hline
 Upper-bound (\ref{eq:bound}) achieved? & Yes & Yes & No & Yes  \\ \hline
 Perfect sequence used & Length-$n_1$, $n_P$-phase & \multicolumn{2}{c|}{No explicit use of perfect sequences} & Length-$N$ \\
 \hline
 Alphabet size & lcm$(N,n_P)$ & \multicolumn{2}{c|}{$N$} &
  $N$ \\ \hline
 \end{tabular}}
\end{table}
\renewcommand{\arraystretch}{1}

\subsection{ZCZ Sequence Sets Generated by Transform Domain
Approach} \label{section:trans_ex} In this subsection, we present
some construction examples using the proposed transform domain
method. All ZCZ sequences obtained are perfect sequences. To
minimize the number of notations, we use $C_i$ and $A_i$ to denote
sequences generated by the methods of \textit{Corollary
\ref{poly_trans}} and \textit{Theorem \ref{trans_fundamental}},
respectively. The same notation may refer to different sequences
in different examples when there is no danger of ambiguity.

\begin{example}
(Use of three DFT matrices of unequal dimensions)~ Partitioning
the Hadamard matrix $\mathbf{H}=\mathbf{F}_6\otimes\mathbf{F}_3
\otimes\mathbf{F}_2$ into submatrices $\widetilde{\mathbf{H}}_{0},
\widetilde{\mathbf{H}}_{1}, \cdots,\widetilde{\mathbf{H}}_{17}$
and performing IDFT on the rows of $\widetilde{\mathbf{H}}_{10}$,
we obtain two sequences
\begin{IEEEeqnarray*}{rCl}
  A_{0}=(
0   0   0   W_{12}^{21} 0   0   0   0   0   W_{12}^{7} 0 00 0   0    W_{12}^{5} 0   0   0   0 0 W_{12}^{15} 0 00    0   0  W_{12}^{1} 0   00  0 0 W_{12}^{23}0 0),&&\\
  A_{1}=(0  0   0   W_{12}^{15} 0   0   0   0   0   W_{12}^{1}  0    00 0   0   W_{12}^{23} 0   0   0   0 0 W_{12}^{21} 0   00  0    0   W_{12}^{7}   0   00  0 0  W_{12}^{5} 0   0).&&
\end{IEEEeqnarray*}
To convert them into ones
with constant moduli we filter them by the perfect polyphase
sequence \cite{Transform}
\begin{IEEEeqnarray}{rCl}
U_{36}=(W_6^0 W_6^0 W_6^0 W_6^0 W_6^0 W_6^0 W_6^0 W_6^5 W_6^4 W_6^3 W_6^2 W_6^1W_6^0 W_6^4 W_6^2 W_6^0 W_6^4 W_6^2 ~~~~&&\nonumber\\
W_6^0 W_6^3 W_6^0 W_6^3 W_6^0 W_6^3
W_6^0 W_6^2 W_6^4 W_6^0 W_6^2 W_6^4 W_6^0 W_6^1 W_6^2 W_6^3 W_6^4
W_6^5)~~~&& \label{eq:L36_perf_poly}
\end{IEEEeqnarray}
which satisfies (\ref{eq:Mow_poly_perf}). The resulting
$(36,2,17)$ bound-achieving ZCZ sequence set consists of
\begin{IEEEeqnarray*}{rCl}
C_0=A_{0}\circ U_{36}=\:
&&(W_{12}^{11}  W_{12}^{8}  W_{12}^{9}  W_{12}^{0}  W_{12}^{3}   W_{12}^{2} W_{12}^{5}  W_{12}^{4}  W_{12}^{7}
W_{12}^{0}   W_{12}^{5}  W_{12}^{6}  W_{12}^{11} W_{12}^{12}  W_{12}^{5} W_{12}^{0}  W_{12}^{7}  W_{12}^{10}\nonumber\\
&&~W_{12}^{5}   W_{12}^{8}  W_{12}^{3}  W_{12}^{0}  W_{12}^{9}   W_{12}^{2} W_{12}^{11} W_{12}^{4}  W_{12}^{1}
W_{12}^{0}   W_{12}^{11} W_{12}^{6}  W_{12}^{5}  W_{12}^{12}  W_{12}^{11}    W_{12}^{0}  W_{12}^{1}  W_{12}^{10}),\nonumber\\
C_1=A_{1}\circ U_{36}=\:
&&(W_{12}^{5}   W_{12}^{8}  W_{12}^{3}  W_{12}^{0}  W_{12}^{9}   W_{12}^{2} W_{12}^{11} W_{12}^{4}  W_{12}^{1}
W_{12}^{0}   W_{12}^{11} W_{12}^{6}  W_{12}^{5}  W_{12}^{12}  W_{12}^{11}    W_{12}^{0}  W_{12}^{1}  W_{12}^{10}\nonumber\\
&&~W_{12}^{11}  W_{12}^{8}  W_{12}^{9}  W_{12}^{0}  W_{12}^{3}   W_{12}^{2} W_{12}^{5}  W_{12}^{4}  W_{12}^{7}
W_{12}^{0}   W_{12}^{5}  W_{12}^{6}  W_{12}^{11} W_{12}^{12}  W_{12}^{5} W_{12}^{0}  W_{12}^{7}  W_{12}^{10}).
\end{IEEEeqnarray*}
If instead we take IDFT on the rows of the first submatrix
$\widetilde{\mathbf{G}}_{0}$ of
$\mathbf{G}=[\widetilde{\mathbf{G}}_{0}^T,
\widetilde{\mathbf{G}}_{1}^T, \cdots,
\widetilde{\mathbf{G}}_{11}^T]^T = \mathbf{F}_2\otimes\mathbf{F}_6
\otimes\mathbf{F}_3$ and filter the resulting sequences
$\{A_{0},A_{1},A_{2}\}$ through (\ref{eq:L36_perf_poly}), we
obtain the bound-achieving $(36,3,11)$ set:
\begin{IEEEeqnarray*}{rCl}
C_0=A_{0}\circ U_{36}=(W_6^0 W_6^1 W_6^2 W_6^3 W_6^4 W_6^5 W_6^0 W_6^2 W_6^4
W_6^0 W_6^2 W_6^4 W_6^0 W_6^3 W_6^0 W_6^3 W_6^0 W_6^3~~&&\nonumber\\
W_6^0 W_6^4 W_6^2 W_6^0 W_6^4 W_6^2 W_6^0 W_6^5 W_6^4
W_6^3 W_6^2 W_6^1 W_6^0 W_6^0 W_6^0 W_6^0 W_6^0 W_6^0),&&\nonumber\\
C_1=A_{1}\circ U_{36}=(W_6^0 W_6^5 W_6^4 W_6^3 W_6^2 W_6^1 W_6^0 W_6^0 W_6^0
W_6^0 W_6^0 W_6^0 W_6^0 W_6^1 W_6^2 W_6^3 W_6^4 W_6^5~~&&\nonumber\\
W_6^0 W_6^2 W_6^4 W_6^0 W_6^2 W_6^4 W_6^0 W_6^3 W_6^0
W_6^3 W_6^0 W_6^3 W_6^0 W_6^4 W_6^2 W_6^0 W_6^4 W_6^2),&&\nonumber\\
C_2=A_{2}\circ U_{36}=(W_6^0 W_6^3 W_6^0 W_6^3 W_6^0 W_6^3 W_6^0 W_6^4 W_6^2
W_6^0 W_6^4 W_6^2 W_6^0 W_6^5 W_6^4 W_6^3 W_6^2 W_6^1~~&&\nonumber\\
W_6^0 W_6^0 W_6^0 W_6^0 W_6^0 W_6^0 W_6^0 W_6^1 W_6^2
W_6^3 W_6^4 W_6^5 W_6^0 W_6^2 W_6^4 W_6^0 W_6^2 W_6^4).&&
\end{IEEEeqnarray*}
\end{example}

\begin{example} (Construction based on Kronecker power of a DFT
matrix)~ Let $\mathbf{H}=\mathbf{F}_3\otimes\mathbf{F}_3
\otimes\mathbf{F}_3\otimes\mathbf{F}_3$ and denote by
$\widetilde{{\mathbf H}}_0,\widetilde{{\mathbf
H}}_1,\cdots,\widetilde{{\mathbf H}}_{26}$ the submatrices
obtained by performing regular $3$rd-order $3$-partition on
$\mathbf{H}$. Choosing $\widetilde{\mathbf H}_2$ and performing
IDFT on its rows, we obtain sequences $\{A_{0},A_{1},A_{2}\}$.
Filtering them by polyphase perfect sequence
\begin{IEEEeqnarray*}{rCl}
U_{81}=(W_9^0   W_9^0   W_9^0   W_9^0   W_9^0   W_9^0   W_9^0
W_9^0 W_9^0   W_9^0   W_9^8   W_9^7   W_9^6 W_9^5
W_9^4
W_9^3 W_9^2 W_9^1~~&&\nonumber\\   W_9^0   W_9^7   W_9^5 W_9^3W_9^1   W_9^8 W_9^6   W_9^4 W_9^2   W_9^0   W_9^6 W_9^3   W_9^0
W_9^6 W_9^3 W_9^0 W_9^6 W_9^3~~&&\nonumber\\
   W_9^0 W_9^5   W_9^1
W_9^6 W_9^2 W_9^7   W_9^3   W_9^8  W_9^4 W_9^0 W_9^4
W_9^8 W_9^3   W_9^7   W_9^2   W_9^6 W_9^1 W_9^5~~&&\nonumber\\
W_9^0 W_9^3 W_9^6 W_9^0   W_9^3   W_9^6   W_9^0
W_9^3 W_9^6   W_9^0 W_9^2 W_9^4 W_9^6   W_9^8 W_9^1
W_9^3 W_9^5 W_9^7~~&&\nonumber\\ W_9^0   W_9^1 W_9^2 W_9^3   W_9^4
W_9^5 W_9^6 W_9^7 W_9^8
),&&
\end{IEEEeqnarray*}
we obtain
\begin{eqnarray*}
C_0=A_{0}\circ U_{81}=( W_9^0   W_9^1   W_9^2   W_9^6   W_9^7
W_9^8 W_9^3   W_9^4 W_9^5   W_9^0   W_9^2   W_9^4
W_9^0 W_9^2 W_9^4 W_9^0 W_9^2   W_9^4 ~~&&\nonumber\\  W_9^0   W_9^3 W_9^6 W_9^3
W_9^6 W_9^0 W_9^6   W_9^0   W_9^3   W_9^0   W_9^4 W_9^8 W_9^6
W_9^1 W_9^5 W_9^3   W_9^7   W_9^2 ~~&&\nonumber\\W_9^0 W_9^5 W_9^1
W_9^0 W_9^5 W_9^1 W_9^0   W_9^5   W_9^1 W_9^0 W_9^6
W_9^3 W_9^3 W_9^0 W_9^6 W_9^6 W_9^3
W_9^0~~&&\nonumber\\ W_9^0
W_9^7 W_9^5 W_9^6 W_9^4 W_9^2 W_9^3 W_9^1 W_9^8 W_9^0 W_9^8 W_9^7
W_9^0 W_9^8 W_9^7 W_9^0 W_9^8 W_9^7~~&&\nonumber\\ W_9^0 W_9^0 W_9^0 W_9^3 W_9^3
W_9^3 W_9^6 W_9^6 W_9^6),&&\nonumber\\
C_1=A_{1}\circ U_{81}=(W_9^0   W_9^7   W_9^5   W_9^6 W_9^4 W_9^2
W_9^3 W_9^1 W_9^8 W_9^0 W_9^8   W_9^7 W_9^0
W_9^8 W_9^7 W_9^0 W_9^8 W_9^7 ~~&&\nonumber\\W_9^0 W_9^0 W_9^0 W_9^3W_9^3 W_9^3 W_9^6 W_9^6 W_9^6 W_9^0 W_9^1 W_9^2 W_9^6 W_9^7
W_9^8 W_9^3   W_9^4 W_9^5 ~~&&\nonumber\\
W_9^0  W_9^2 W_9^4 W_9^0
W_9^2   W_9^4 W_9^0   W_9^2 W_9^4   W_9^0 W_9^3
W_9^6 W_9^3 W_9^6   W_9^0 W_9^6 W_9^0 W_9^3~~&&\nonumber\\
 W_9^0W_9^4 W_9^8 W_9^6 W_9^1   W_9^5 W_9^3 W_9^7 W_9^2   W_9^0 W_9^5
W_9^1 W_9^0 W_9^5 W_9^1 W_9^0
W_9^5 W_9^1 ~~&&\nonumber\\W_9^0   W_9^6   W_9^3 W_9^3 W_9^0
W_9^6 W_9^6 W_9^3 W_9^0),&&\nonumber\\
C_2=A_{2}\circ U_{81}=(W_9^0   W_9^4   W_9^8   W_9^6   W_9^1 W_9^5
W_9^3   W_9^7 W_9^2   W_9^0   W_9^5   W_9^1
W_9^0 W_9^5 W_9^1 W_9^0 W_9^5   W_9^1 ~~&&\nonumber\\  W_9^0   W_9^6   W_9^3   W_9^3
W_9^0   W_9^6 W_9^6   W_9^3   W_9^0   W_9^0   W_9^7   W_9^5 W_9^6
W_9^4 W_9^2  W_9^3   W_9^1   W_9^8 ~~&&\nonumber\\  W_9^0   W_9^8
W_9^7   W_9^0 W_9^8   W_9^7   W_9^0   W_9^8  W_9^7
W_9^0 W_9^0 W_9^0 W_9^3 W_9^3   W_9^3   W_9^6   W_9^6   W_9^6~~&&\nonumber\\
W_9^0   W_9^1 W_9^2   W_9^6 W_9^7   W_9^8   W_9^3
W_9^4 W_9^5 W_9^0 W_9^2 W_9^4   W_9^0 W_9^2   W_9^4
W_9^0 W_9^2 W_9^4~~&&\nonumber\\ W_9^0   W_9^3 W_9^6   W_9^3 W_9^6
W_9^0 W_9^6 W_9^0 W_9^3)~&&\nonumber
\end{eqnarray*}
which form an $(81,3,26)$ ZCZ sequence set that satisfies
(\ref{eq:bound}).
\end{example}

\begin{example}
(Quadriphase sequences derived from a Sylvester Hadamard matrix)~
Partition the Sylvester Hadamard matrix $\mathbf{H}_{16}$ into
four submatrices, $\widetilde{\mathbf{H}}_{0},
\widetilde{\mathbf{H}}_{1}, \widetilde{\mathbf{H}}_{2},
\widetilde{\mathbf{H}}_{3}$, and select the first submatrix,
$\widetilde{\mathbf{H}}_{0}=[\mathbf{h}_{0}^T, \mathbf{h}_{1}^T,
\mathbf{h}_{2}^T,\mathbf{h}_{3}^T]^{T}$. Filtering the IDFT of
$\mathbf{h}_{i}$ by
\begin{eqnarray}
U_{16}=(W_4^0W_4^0W_4^0W_4^0W_4^0W_4^3W_4^2W_4^1
   W_4^0W_4^2W_4^0W_4^2W_4^0W_4^1W_4^2W_4^3),&&
\label{eq:qpsk_perfect_seq}
\end{eqnarray}
for each $i$, we have
\begin{eqnarray*}
C_0=(W_4^0W_4^1W_4^2W_4^3W_4^0W_4^2W_4^0W_4^2
W_4^0W_4^3W_4^2W_4^1W_4^0W_4^0W_4^0W_4^0),&&\nonumber\\
C_1=(W_4^0W_4^3W_4^2W_4^1W_4^0W_4^0W_4^0W_4^0
W_4^0W_4^1W_4^2W_4^3W_4^0W_4^2W_4^0W_4^2),&&\nonumber\\
C_2=(W_4^0W_4^1W_4^0W_4^1W_4^0W_4^2W_4^2W_4^0
W_4^0W_4^3W_4^0W_4^3W_4^0W_4^0W_4^2W_4^2),&&\nonumber\\
C_3 =(W_4^0W_4^3W_4^0W_4^3W_4^0W_4^0W_4^2W_4^2
W_4^0W_4^1W_4^0W_4^1W_4^0W_4^2W_4^2W_4^0),&&
\end{eqnarray*}
a quadriphase $(16,4,3)$ ZCZ sequence family that satisfies
(\ref{eq:bound}).

Note that if a $3$rd-order $2$-partition is used instead, we have
a set of only two sequences but with a larger ZCZ width, i.e., we
obtain a quadriphase $(16,2,7)$ ZCZ sequence set consisting of
$\{A_{0}\circ U_{16},A_{1}\circ U_{16}\}$ or $\{A_{2}\circ
U_{16},A_{3}\circ U_{16}\}$.
\end{example}

\section{Direct Synthesis Method}
\label{section:direct}
\subsection{Preliminaries}
We now present an alternate approach which is capable of
generating ZCZ sequences of arbitrary nonprime periods.
\begin{definition}
A binary (0- and 1-valued) sequence of period $N$ which satisfies
the ZCZ width constraint $T$ on its AC function is called a basic
$(N,T)$ sequence.
\end{definition}
A basic sequence can be obtained by the simple rule given in
\begin{lemma}
A binary sequence $B=(b_0, b_1,\cdots, b_{N-1})$, $b_i \in
\{0,1\}$, is a basic $(N,T)$ sequence if the minimum run length of
$0$'s is $T$ (in the circular sense), where a run refers to a
string of identical symbols and $T$ is also called the
\textit{minimum spacing} of $B$. \label{basic_seq}
\end{lemma}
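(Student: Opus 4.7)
The plan is to unpack $\theta_{BB}(\tau)$ as a count of coincident $1$'s and translate the run-length hypothesis into a statement about cyclic separations of these $1$'s.

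First, I would write
\[
\theta_{BB}(\tau)=\sum_{n=0}^{N-1}b_n\,b_{(n-\tau)\bmod N},
\]
and note that since $b_n\in\{0,1\}$, the summand is $1$ precisely when both $b_n$ and $b_{(n-\tau)\bmod N}$ equal $1$. Hence $\theta_{BB}(\tau)$ counts the number of ordered pairs of $1$-positions in $B$ whose cyclic difference is exactly $\tau\pmod N$. The ZCZ condition on $B$ requires this count to vanish for every $\tau$ with $1\le \tau \le T$ (the symmetric range $N-T\le\tau\le N-1$ then follows automatically from $\theta_{BB}(\tau)=\theta_{BB}(N-\tau)$ for real sequences).

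Next, I would let $P=\{p_0<p_1<\cdots<p_{k-1}\}$ denote the positions of the $1$'s, and define the cyclic gaps $g_i = p_{(i+1)\bmod k}-p_i$, where the wrap-around gap $g_{k-1}=N-p_{k-1}+p_0$ is included to reflect the circular sense of the hypothesis. The assumption ``minimum run length of $0$'s is $T$'' means each gap contains at least $T$ zeros, i.e.\ $g_i\ge T+1$ for every $i$. The key combinatorial step is then: for any two distinct indices $i\ne j$, the cyclic distance between $p_i$ and $p_j$ is the minimum of two arc lengths around the cycle, each obtained by summing at least one consecutive gap; since every gap is at least $T+1$, the cyclic distance between any two distinct $1$-positions is at least $T+1$.

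Finally, I would conclude: if some $\tau\in\{1,\ldots,T\}$ contributed a nonzero term to $\theta_{BB}(\tau)$, then there would exist two $1$-positions $p,q\in P$ with $p-q\equiv\tau\pmod N$, forcing their cyclic distance to be $\min(\tau,N-\tau)\le T$, contradicting the previous step. Hence $\theta_{BB}(\tau)=0$ for $1\le\tau\le T$, which is exactly the ZCZ condition in \textit{Definition \ref{def:ZCZ}} restricted to the AC. The only subtlety worth flagging is the correct handling of the wrap-around gap in the circular interpretation of ``minimum run length''; once that is included among the $g_i$'s, the argument is essentially a one-line counting observation and presents no real obstacle.
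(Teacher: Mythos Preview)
Your argument is correct. The paper does not supply a proof of this lemma at all; it is stated as a simple rule and treated as self-evident, so there is nothing to compare against beyond noting that your explicit counting argument (reducing $\theta_{BB}(\tau)$ to the number of pairs of $1$-positions at cyclic distance $\tau$, then bounding all such distances below by $T+1$ via the gap condition) is exactly the intended reasoning made precise.

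One small caveat worth tightening: your translation ``minimum run length of $0$'s is $T$'' $\Rightarrow$ ``$g_i\ge T+1$ for every $i$'' tacitly assumes there are no two adjacent $1$'s. Strictly read, a sequence like $(1,1,0,\ldots,0)$ has only one run of $0$'s, which could be long, yet $g_0=1$. The paper's intended meaning (and the one used everywhere downstream) is that every cyclic gap between consecutive $1$'s contains at least $T$ zeros, which is what you use; just state that interpretation explicitly rather than deriving it from the literal phrase.
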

\subsection{Synthesis Process}
Two new operations are needed.
\begin{definition}
A basic $(N,T)$ sequence $B$ with Hamming weight $w_H(B)$ can be
expressed as the sum (via component-wise addition) of $M$
length-$N$ binary sequences, $\{B_i\}_{i=0}^{M-1}$, with disjoint
nonempty supports so that $\sum_{i=0}^{M-1} w_H(B_i)=w_H(B)$ and
$w_H(B_i)\geq 1$. The sequence set $\{B_i\}_{i=0}^{M-1}$ is said to be
an {\it orthogonal tone decomposition} of $B$.
\label{def:ortho_tone_decomp}
\end{definition}
It is trivial to see that $\{B_i\}_{i=0}^{M-1}$ is a binary $(N,M,T)$
ZCZ sequence family and each $B_i$ is a basic $(N, T_i)$ sequence
with $T_i \geq T$.
\begin{definition}
Let $V=(v(0),v(1),\cdots,v(N-1))$ be a length-$N$ binary sequence
with Hamming weight $w_H(V)=k$ and $\mathbf{U}=[u_{ij}]$ be any
matrix having $k$ columns and arbitrary number of rows $k'$. The
$V$-upsampled matrix of $\mathbf{U}$ is the $k' \times N$ matrix
$\mathbf{P}=[p_{ij}]$ defined by
\begin{eqnarray}
 p_{ij}= \left\{
 \begin{array}{ll}
 u_{im}, & j=s_V(m),~m=0,1,\cdots,k-1;\\
 0, & \mbox{otherwise}, \\
 \end{array}
 \right.
\end{eqnarray}
where $s_V(m)=$ the coordinate of sequence $V$'s $m$th nonzero
entry. We denote the above row-wise nonuniform upsampling
operation on $\mathbf{U}$ by $\mathbf{P}=\mathbf{U} \vartriangle
V$. \label{def:nonuni_upsamp}
\end{definition}

Obviously, the nonzero entries in all rows of the matrix
$\mathbf{P}=\mathbf{U} \vartriangle V$ are in the same positions.
Hence if $V$ is an $(N,T)$ basic sequence constructed by the
procedure described in \textit{Lemma \ref{basic_seq}}, then each
row has the same minimum spacing $T$ and all CC (including AC)
values are zero at $0<\tau\leq T$. Values of all CC functions at
$\tau=0$ are zero when $\mathbf{U}$ is unitary in which case rows
of $\mathbf{P}$ all have ZCZ width $T$. Invoking \textit{Lemma
\ref{mod_inv}}, we have
\begin{lemma}
Let $B$ be a basic $(N,T)$ sequence with $w_H(B)=K$, ${\mathbf
B}\defeq\{B_i\}_{i=0}^{M-1}$ be an orthogonal tone decomposition
of $B$, $w_H(B_i)=k_i$, and ${\bf U}_i, 0\leq i < M$ be $k_i
\times k_i$ unitary matrices (not necessarily distinct). Then for
each $i$, the rows of nonuniform upsampled matrix $\mathbf{P}_i=
\mathbf{U}_i \vartriangle B_i$ constitute an $(N, K_i, T_i)$ ZCZ
sequence family, where $T_i \geq T$ is the minimum spacing of
$B_i$. Moreover, the rows of all $\mathbf{P}_i$'s constitute an
$(N,K,T)$ ZCZ sequence set.
\label{direct_syn}
\end{lemma}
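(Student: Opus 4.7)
The plan is to verify the periodic AC/CC conditions directly from the structure of $\mathbf{P}_i = \mathbf{U}_i \vartriangle B_i$. The key observation is that, by \textit{Definition \ref{def:nonuni_upsamp}}, every row of $\mathbf{P}_i$ has its nonzero entries at exactly the same set of coordinates, namely the support $\{s_{B_i}(0),\ldots,s_{B_i}(k_i-1)\}$ of $B_i$. Hence the support pattern (and thus the ``gap structure'') of every row is inherited from $B_i$, while the nonzero values come from the corresponding column of $\mathbf{U}_i$. This lets the minimum-spacing property of $B_i$ control the support of any two-row correlation.

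For the intra-block claim that the rows of $\mathbf{P}_i$ form an $(N,k_i,T_i)$ ZCZ family, I would fix the $r$th and $s$th rows, call them $p$ and $q$, and compute
\begin{equation*}
\theta_{pq}(\tau)=\sum_{n=0}^{N-1} p(n)\,q^*(n-\tau).
\end{equation*}
At $\tau=0$ the nonzero positions of $p$ and $q$ coincide, so the sum reduces to $\sum_{m=0}^{k_i-1} u_{rm}u_{sm}^*$, which is the inner product of two rows of the unitary matrix $\mathbf{U}_i$; it vanishes when $r\neq s$ and equals a nonzero constant when $r=s$. For $0<|\tau|_N\leq T_i$, the basic-sequence property of $B_i$ from \textit{Lemma \ref{basic_seq}} says that any two distinct nonzero positions of $B_i$ are separated circularly by at least $T_i+1$, so there is no $n$ with both $n$ and $n-\tau$ in $\mathrm{supp}(B_i)$; thus every product in the sum is zero.

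For the global claim, I would additionally handle a row $p$ taken from $\mathbf{P}_i$ and a row $q$ taken from $\mathbf{P}_j$ with $i\neq j$, and compute $\theta_{pq}(\tau)$ for $|\tau|_N\leq T$. Since $\{B_i\}$ is an orthogonal tone decomposition of $B$ (\textit{Definition \ref{def:ortho_tone_decomp}}), $\mathrm{supp}(B_i)\cap\mathrm{supp}(B_j)=\emptyset$ while $\mathrm{supp}(B_i)\cup\mathrm{supp}(B_j)\subseteq\mathrm{supp}(B)$. The $\tau=0$ case is then immediate from disjointness. For $0<|\tau|_N\leq T$, the minimum spacing $T$ of $B$ guarantees that any two distinct nonzero coordinates of $B$ are separated circularly by more than $T$, and the non-product term again forces the sum to be zero. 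Combining this with the intra-block analysis yields the $(N,K,T)$ ZCZ property with $K=\sum_i k_i = w_H(B)$.

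No serious obstacle is anticipated; the only point that needs care is to interpret the minimum run length of zeros in the circular sense (as the paper specifies) so that the separation argument is valid for every shift $\tau$ in the stated range, including shifts near $\pm T$ that would otherwise require a wraparound. The appeal to \textit{Lemma \ref{mod_inv}} can be used, if desired, to note that the conclusion is preserved if each $\mathbf{P}_i$ is filtered by a common perfect sequence (for converting to a prescribed alphabet), but it is not needed for the correlation identities themselves.
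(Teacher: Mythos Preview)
Your proposal is correct and follows essentially the same approach as the paper: the paper's argument (given inline just before the lemma) observes that all rows of $\mathbf{U}\vartriangle V$ share the support of $V$, so the minimum-spacing property of the basic sequence forces the CC/AC values to vanish for $0<|\tau|_N\le T$, while unitarity of $\mathbf{U}$ handles $\tau=0$. You spell this out more carefully---in particular, you make the inter-block case (rows from different $\mathbf{P}_i$'s, using disjointness of the $\mathrm{supp}(B_i)$ within $\mathrm{supp}(B)$) explicit, whereas the paper leaves that step largely implicit---and you are right that the paper's appeal to \textit{Lemma \ref{mod_inv}} is not actually needed for the correlation identities in the lemma as stated.
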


\subsection{Polyphase ZCZ Sequences}
\label{section:direct_poly} The above process does not guarantee a
constant modulus constellation for the entries of the generated
sequences. We need a special class of basic sequences and a
suitable perfect sequence to generates polyphase sequence
families.
\begin{theorem}
Let $A'=\{a'_n\}$ be a length-$N'$ perfect $N_{A'}$-PSK sequence,
where $2 \leq N_{A'} \leq 2N'$ and $A$ be the perfect sequence of
length $N= N_rN'$ derived from $N_r$-fold upsampling on $A'$. An
$(N,N_r, N'-1)$ or $(N,N_r, N'-2)$ ZCZ $\ell$-PSK sequence family,
where $\ell=\text{lcm}(N_{A'},N_r)$, can be obtained by filtering
the rows of $\mathbf{P}=\mathbf{F}_{N_r}\vartriangle B$ by $A$,
where $B=(b_0,b_1, \cdots,b_{N-1})$ is the weight-$N_r$ basic
sequence defined by
\begin{eqnarray}
b_i= \left\{
\begin{array}{ll}
1, &  i=k  N',~k=0,1,\cdots, N_r-1; \\
0, &  \mbox{otherwise},
\end{array}
\right. \label{eq:direct_prime}
\end{eqnarray}
if $N_r$ and $ N'$ are relatively prime, or by
\begin{eqnarray} b_i= \left\{
\begin{array}{ll}
1, &  i=k N', ~k=0,1, \cdots, \frac{L_0}{ N'}-1, ~\mbox{or} \\
   &  i=\ell L_0+\left(\frac{N}{L_0}-\ell\right)+k N',~\mbox{where}\\
   &  \ell=1,2,\cdots,\frac{N}{L_0}-1,\\
   &  k=0,1,\cdots, \frac{L_0}{ N'}-1;\\
0, &  \mbox{otherwise},
\end{array}
\right. \label{eq:direct_notprime}
\end{eqnarray}
if gcd$(N_r, N')\neq 1$, where $L_0=\text{lcm}(N_r,N')$.
\label{poly_by_direct}
\end{theorem}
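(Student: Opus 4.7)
The plan is to carry out the proof in two logical stages: first show that the pre-filter rows of $\mathbf{P} = \mathbf{F}_{N_r} \vartriangle B$ already form a ZCZ family of the claimed parameters, and then verify that post-filtering by $A$ simultaneously preserves the correlation structure and converts the entries into a polyphase sequence in the claimed constellation. For the first part I would first argue that $A$ is perfect: its DFT is the $N_r$-fold periodic extension of DFT$\{A'\}$, which by \emph{Corollary \ref{perfect_seq}} has constant magnitude, so $|\text{DFT}\{A\}|$ is constant as well, and a second application of \emph{Corollary \ref{perfect_seq}} yields that $A$ is perfect. \emph{Lemma \ref{mod_inv}} then guarantees that filtering by $A$ preserves all AC and CC functions up to a common scalar. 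With $\mathbf{F}_{N_r}$ unitary and $B$ a basic $(N,T)$ sequence (\emph{Lemma \ref{basic_seq}}), the $M=1$ case of \emph{Lemma \ref{direct_syn}} applied with $\mathbf{U}_0 = \mathbf{F}_{N_r}$ and $B_0 = B$ then delivers an $(N, N_r, T)$ ZCZ family from the rows of $\mathbf{P}$.

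It remains to pin down $T$ and to verify polyphase-ness. For the minimum spacing of $B$: in the coprime case the 1's are equispaced at $0, N', 2N', \ldots, (N_r-1)N'$ with cyclic gap $N'$ throughout, so the minimum run of zeros is $N' - 1$ and $T = N' - 1$. In the non-coprime case, with $d = \gcd(N_r, N')$ and $L_0 = \text{lcm}(N_r, N')$, the length-$L_0$ blocks each place their 1's at intervals of $N'$ internally, while the offset $(d - \ell)$ applied to block $\ell \geq 1$ shortens every inter-block gap as well as the terminal wraparound gap from $N'$ to $N' - 1$ (the only longer gap being the block-0-to-block-1 gap $N' + d - 1$). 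The minimum circular gap between 1's is therefore $N' - 1$, giving $T = N' - 2$.

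The main obstacle is to show that the filtered entries are nonzero and lie in $\ell$-PSK with $\ell = \text{lcm}(N_{A'}, N_r)$. For a row $p$ of $\mathbf{P}$, the filtered value at position $n$ is $\sum_k p(k) A^*(k-n)$, and $A^*(k-n)$ vanishes unless $k \equiv n \pmod{N_r}$. Hence I must show that the $N_r$ supports of $B$ form a complete residue system modulo $N_r$. In the coprime case this is immediate since $N'$ is a unit modulo $N_r$. In the non-coprime case the block-$\ell$ positions reduce modulo $N_r$ to $(d - \ell) + kN' \bmod N_r$ (with no shift when $\ell = 0$); since $\gcd(N'/d, N_r/d) = 1$, as $k$ ranges over $\{0, \ldots, L_0/N' - 1\}$ these sweep out all residues congruent to $-\ell \bmod d$, so as $\ell$ ranges over $\{0, 1, \ldots, d - 1\}$ every residue modulo $N_r$ is hit exactly once. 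Consequently exactly one nonzero term survives at each $n$, and it is the product of an $N_r$-th root of unity (an entry of $\mathbf{F}_{N_r}$) with an $N_{A'}$-th root of unity (an entry of $A'^*$), which lies in $\text{lcm}(N_{A'}, N_r)$-PSK, completing the polyphase verification.
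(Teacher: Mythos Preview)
Your argument is correct and follows the same overall strategy as the paper's proof in Appendix~A: establish the ZCZ property via \emph{Lemma~\ref{direct_syn}} and \emph{Lemma~\ref{mod_inv}}, then show that each filtered entry $c_n=\sum_j p_j a_{|j-n|_N}^*$ contains exactly one nonzero summand, which is a product of an $N_r$th and an $N_{A'}$th root of unity.

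The only noteworthy difference is in how the ``exactly one nonzero term'' claim is established. The paper argues via B\'ezout's identity in the coprime case (manipulating $aN'+bN_r=1$ to exhibit, for each $0\le s<N_r$, a unique $(m,n)$ with $mN'\equiv s+nN_r$), and via an orthogonal tone decomposition $B=\sum_{\ell}B_\ell$ in the non-coprime case, showing that $B_\ell\circ A$ is supported on a fixed residue class mod~$d$. You instead observe directly that the support of $B$ forms a complete residue system modulo $N_r$: in the coprime case because $N'$ is a unit mod $N_r$, and in the non-coprime case by reducing each block's positions mod $N_r$ and noting that block~$\ell$ hits exactly the residues congruent to $-\ell\pmod d$. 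This is equivalent to the paper's argument but packaged more transparently; in particular, your formulation makes it immediate why the surviving term is a product of two roots of unity. You also spell out two points the appendix leaves implicit---that $A$ is perfect (via the periodic-extension DFT argument) and the explicit minimum-spacing computation yielding $T=N'-1$ or $N'-2$---so your write-up is somewhat more self-contained.
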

\begin{proof}
See Appendix A.
\end{proof}

\subsection{Properties, Constraints, and Comparisons}
The following three properties about the approach described above
are easily verifiable.
\begin{remark}(Parameter relations)
For a fixed $N$ and $K=N_r$, ZCZ sequence families generated by
(\ref{eq:direct_prime}) achieve the upper bound (\ref{eq:bound})
and those generated from (\ref{eq:direct_notprime}) satisfy the
relation $K(T+1) = N-N_r$.
\end{remark}
\begin{remark} (Nested-like and inter-set properties)
\label{prop:shif_new} The construction described in {\it Lemma
\ref{direct_syn}} results in a nested-like structure similar to
that of {\it Remark 2}. Instead of decomposing a Hadamard matrix,
we decompose a basic sequence of minimum spacing $S$ into several
basic sequences of minimum spacing $S' \geq S$ and use the latter
basic sequences to construct sequence sets whose union constitutes
a larger ZCZ set with a ZCZ width smaller than that of individual
subset; see the second part of {\it Example 7}.

The construction of {\it Theorem \ref{poly_by_direct}} needs a
special choice of the Hadamard matrix and basic sequence used
because of the polyphase requirement. But as a special case of
{\it Lemma \ref{direct_syn}}, it still preserve the nested-like
structure. In fact, the basic sequences defined by
(\ref{eq:direct_prime}) and (\ref{eq:direct_notprime}) can be
cyclically shifted to generate distinct polyphase ZCZ sequence
families with the same $(N,K,T)$. The zero CC zone width between a
sequence from the set based on $B$ and one from the set based on a
circularly-shifted version of $B$ is determined by the CC function
of the two basic sequences used. If, for instance,
$\mathbf{C}_0=\{C_{0,0},C_{0,1},\cdots,C_{0,K-1}\}$ and
$\mathbf{C}_1=\{C_{1,0},C_{1,1},\cdots,C_{1,K-1}\}$ are derived
from basic sequence $B(n)$ and $B'(n)=B(|n-n'|_N)$, respectively,
then $\theta_{C_{0,i}C_{1,\ell}}(\tau)=0$, $\forall~ i, \ell$ and
$|\tau|_N\leq T'$, where $T'<T$ is the zero-CC zone width of
$\theta_{B,B'}(\tau)$. As a result, the set ${\bf C}_0 \cup {\bf
C}_1$ has the ZCZ width $T'<T$; see \textit{Example
\ref{ex:subs_direct}} in the next subsection.
\end{remark}
\begin{remark} (Binary sequences) \label{prop:Hada_mtx_L}
To generate binary ZCZ sequences one has to use binary Hadamard
matrices, which exist for $N_r=2^\ell$, $12\times 2^\ell$, or
$20\times 2^\ell$ \cite{Hada_ref}, to replace the $N_r$-DFT
matrix, ${\bf F}_{N_r}$, in constructing ${\bf P}$ and reduce the
required alphabet size to just $\text{lcm}(N_{A'},2)=2$; see
\textit{Examples \ref{ex:BiDirect1}} and
\textit{\ref{ex:BiDirect2}}.
\end{remark}

The parameter selection constraints and related properties for our
and some related existing methods are given in Table
\ref{tab:perf_invol}. We provide more comparisons in the following
remarks.
\begin{remark}
{\it Theorem \ref{poly_by_direct}} does not explicitly mention any
restriction on the alphabet size. As these constructions need to
use a length-$N'$ perfect sequence and $N_r\times N_r$ Hadamard
matrices, which do not always exist for all lengths ($N'$), matrix
dimension ($N_r$) and all constellation sizes ($N_{A'}$), the ZCZ
width, sequence length, and family size are thus implicitly
constrained by the alphabet size.
\end{remark}
\begin{remark}
Tang \textit{et al.} \cite{interleaving} classifies the ZCZ
sequences construction methods into two major categories, i.e., i)
those based on complementary sets and ii) those derived from
perfect sequences. Our approach belongs to the latter category and
generates sequences with length $N=n_1 n_2$, where $n_1$ is the
length of a perfect sequences. The constructions proposed in
\cite{Polyphase}--\cite{GCL} have similar
constraints on the sequence length $N$ and those mentioned in the
next three remarks.
\end{remark}
\begin{remark}
In \cite{Polyphase}, an $(N,k,(n_1-2)k^{\ell-1})$ set is
constructed by using a length-$n_1$ perfect sequence, where
$n_1=kt$, $k\leq n_1$, but $n_2$ must be of the form $k^\ell$,
$\ell>1$. The interleaving scheme \cite{interleaving} requires
that either i) gcd$(n_1,n_2)=1$ or ii) $n_1|n_2$ or $n_2|n_1$ to
generate an $(N,n_2,n_1-1)$ or $(N,n_2,n_1-2)$ ZCZ family. The
length constraints in i) is similar to that for the construction
(\ref{eq:direct_prime}) while ii) leads to ZCZ families of the
same parameters as those by the construction
(\ref{eq:direct_notprime}) except that the latter is only
constrained by gcd$(n_1,n_2)\neq 1$.
\end{remark}
\begin{remark}
A length-$N$ ($N=n_1n_2$) Frank-Chu perfect sequence is used in
\cite{GCL} to generate an $(N,n_2,n_1-1)$ family. This method also
calls for the use of an $n_2 \times n_2$ DFT or binary Hadamard
matrix. However, for the case when $n_1$ is a perfect square and a
DFT (or binary Hadamard) matrix is used, our approach needs an
alphabet of size lcm$(n_2,\sqrt{n_1})$ or lcm$(2,\sqrt{n_1})$
instead of lcm$(n_2,n_1)$, lcm$(n_2,2n_1)$ or lcm$(2,n_1)$
required by \cite{GCL}. Moreover, as \cite{GCL} is primarily
interested in polyphase (nonbinary) sequences, their approach is
not applicable for binary set since it requires $n_1=2$. Our
constructions, on the other hand, can be applied to generate both
binary and nonbinary families.
\end{remark}
\begin{remark}
The construction based on (\ref{eq:direct_prime}) generates
sequences that possess the same correlation properties as those of
the so-called PS sequences \cite{PS}. These sequences are
bound-achieving; they have nonzero AC values only on subperiodic
correlation lags at $\tau=m(T+1)$, $m\in\mathbb{Z}$, and zero CC
for all lags. While the PS sequences require that
gcd$(n_1,n_2)=1$, where $n_1$ is a perfect square, to construct an
$(N,n_2,n_1-1)$ family, our method does not impose any constraint
on $n_1$. Moreover, when $n_1$ is a perfect square, our approach
can generate sequences, which, for the convenience of reference,
are called \textit{PS-like sequences}, that require a
constellation of size lcm$(n_2,\sqrt{n_1})=N /\sqrt{n_1}$ as
opposed to lcm$(n_1,n_2)=N$ required by the PS approach \cite{PS}.
Similarly, we refer to those families derived from
(\ref{eq:direct_prime}) using non-perfect square $n_1$ as
\textit{generalized PS sequences} for these sequences cannot be
generated by the PS method. Some PS-like and generalized PS
sequence sets are given in the following subsection.
\end{remark}

\renewcommand{\arraystretch}{1.5}
\begin{table*}[t]
\caption{Polyphase ZCZ sequence sets with sequence length $N=n_1n_2$ using $n_P$-PSK perfect sequence}
 \centering
 \tabcolsep 0.05in
 \label{tab:perf_invol}
 \footnotesize{
 \begin{tabular}{|C{.8in}|C{.8in}|C{.65in}|C{.65in}|
                  C{.8in}|C{.65in}|C{.65in}|C{.65in}|}
  \hline
  & Torii \cite{Polyphase} & \multicolumn{2}{c|}{Tang \cite{interleaving}} & 
     Popovic \cite{GCL} & Park \cite{PS} &
   \multicolumn{2}{c|}{\textit{Theorem \ref{poly_by_direct}}} \\ \hline
 Perfect sequence length & $n_1=kt$, $k\leq n_1$ & \multicolumn{2}{c|}{$n_1$} & $N$ & $n_1=k^2$
 & \multicolumn{2}{c|}{$n_1$} \\ \hline
   Set size $K$ & $k$ & \multicolumn{2}{c|}{$n_2$} &
   $n_2$ & $n_2$ & \multicolumn{2}{c|}{$n_2$} \\ \hline
  ZCZ width $T$ & $(n_1-2)k^{\ell-1}$ & $n_1-1$ & $n_1-2$ &
  $n_1-1$ & $n_1-1$ & $n_1-1$ & $n_1-2$ \\ \hline
  Constraints on $n_2$ & $n_2=k^\ell$, $\ell>1$ & $\gcd(n_1,n_2)$ $=1$ &
  $n_1|n_2$ or $n_2|n_1$ & None & $\gcd(n_1,n_2)$ $=1$ &
  $\gcd(n_1,n_2)$ $=1$ & $\gcd(n_1,n_2)$ $\neq1$ \\ \hline
  Upper-bound (\ref{eq:bound}) achieved? & No & Yes & No & Yes & Yes & Yes & No \\ \hline
  Alphabet size & lcm$(k,n_P)$ & \multicolumn{2}{c|}{lcm$(n_2,n_P)$} & 
  lcm$(n_1,n_2)$ or lcm$(2n_1,n_2)$ & $N$ & \multicolumn{2}{c|}{lcm$(n_2,n_P)$} \\ \hline
 \end{tabular}}
\end{table*}
\renewcommand{\arraystretch}{1}

\subsection{Examples of Direct Synthesized Sequence Sets}
\label{section:examples}
\begin{example}(PS-like sequences)~
Following the procedure described in \textit{Theorem
\ref{poly_by_direct}} with $N_r=2$, $N'=9$, $B=(1 0 0 0 0 0 0 0 0
1 0 0 0 0 0 0 0 0)$ and $\mathbf{U}$ being the Sylvester Hadamard matrix
$\mathbf{H}_2$, we obtain $\mathbf{P}={\bf U}\vartriangle
B=[P_0^T, P_1^T]^T$, where
\begin{IEEEeqnarray*}{rCl}
P_0=(1,0,0,0,0,0,0,0,0,&&1,0,0,0,0,0,0,0,0), \nonumber\\
P_1=(1,0,0,0,0,0,0,0,0,&-&1,0,0,0,0,0,0,0,0).~~~~
\end{IEEEeqnarray*}
Filtering them by the upsampled perfect sequence
$A=(W_{3}^{0} 0 W_{3}^{0} 0 W_{3}^{0} 0 W_{3}^{0} 0 W_{3}^{2} 0
W_{3}^{1} 0 W_{3}^{0} 0 W_{3}^{1} 0 W_{3}^{2} 0)$, we have
\begin{IEEEeqnarray}{rCl}
C_{0,0}=P_0 \circ A
  &=&(W_{6}^{0} W_{6}^{2}   W_{6}^{2}   W_{6}^{0}   W_{6}^{4}    W_{6}^{0}  W_{6}^{0}   W_{6}^{0}   W_{6}^{4}
  W_{6}^{0}  W_{6}^{2}   W_{6}^{2}   W_{6}^{0}    W_{6}^{4}   W_{6}^{0}  W_{6}^{0}   W_{6}^{0}    W_{6}^{4}), \nonumber\\
C_{0,1}=P_1 \circ A
  &=&(
  W_{6}^{0} W_{6}^{5}   W_{6}^{2}   W_{6}^{3}   W_{6}^{4}    W_{6}^{3}  W_{6}^{0}   W_{6}^{3}   W_{6}^{4}
  W_{6}^{3}  W_{6}^{2}   W_{6}^{5}   W_{6}^{0}    W_{6}^{1}   W_{6}^{0}  W_{6}^{3}   W_{6}^{0}    W_{6}^{1}).~~~~
\label{eq:PS-like_1}
\end{IEEEeqnarray}
It can be shown that
\begin{IEEEeqnarray}{rCl}
\theta_{C_{0,0}C_{0,1}}(\tau)=0,~~~
|\theta_{C_{0,0}C_{0,0}}(\tau)|=|\theta_{C_{0,1}C_{0,1}}(\tau)|=18\delta(|\tau|_9).
\nonumber
\end{IEEEeqnarray}
and $\mathbf{C}_0=\{C_{0,0},C_{0,1}\}$, is an $(18,2,8)$
bound-achieving ZCZ sequence family.

Using cyclically-shifted basic sequences $B'(n)=B(|n-3|_{18})$ and
$B''(n)=B(|n-6|_{18})$, we obtain two new $(18,2,8)$ ZCZ sequence
sets $\mathbf{C}_1=\{C_{1,0},C_{1,1}\}$ and
$\mathbf{C}_2=\{C_{2,0},C_{2,1}\}$ whose members are
\begin{IEEEeqnarray*}{rCl}
  C_{1,0}&=&(
  W_{6}^{0} W_{6}^{0}   W_{6}^{4}   W_{6}^{0}   W_{6}^{2}    W_{6}^{2}  W_{6}^{0}   W_{6}^{4}   W_{6}^{0}
  W_{6}^{0}  W_{6}^{0}   W_{6}^{4}   W_{6}^{0}    W_{6}^{2}   W_{6}^{2}  W_{6}^{0}   W_{6}^{4}   W_{6}^{0}), \nonumber\\
  C_{1,1}&=&(
  W_{6}^{3} W_{6}^{0}   W_{6}^{1}   W_{6}^{0}   W_{6}^{5}    W_{6}^{2}  W_{6}^{3}   W_{6}^{4}   W_{6}^{3}
  W_{6}^{0}  W_{6}^{3}   W_{6}^{4}   W_{6}^{3}    W_{6}^{2}   W_{6}^{5}  W_{6}^{0}   W_{6}^{1}   W_{6}^{0}), \nonumber\\
  C_{2,0}&=&(
  W_{6}^{0} W_{6}^{4}   W_{6}^{0}   W_{6}^{0}   W_{6}^{0}    W_{6}^{4}  W_{6}^{0}   W_{6}^{2}   W_{6}^{2}
  W_{6}^{0}  W_{6}^{4}   W_{6}^{0}   W_{6}^{0}    W_{6}^{0}   W_{6}^{4}  W_{6}^{0}   W_{6}^{2}   W_{6}^{2}), \nonumber\\
  C_{2,1}&=&(
  W_{6}^{0} W_{6}^{1}   W_{6}^{0}   W_{6}^{3}   W_{6}^{0}    W_{6}^{1}  W_{6}^{0}   W_{6}^{5}   W_{6}^{2}
  W_{6}^{3}  W_{6}^{4}   W_{6}^{3}   W_{6}^{0}    W_{6}^{3}   W_{6}^{4}  W_{6}^{3}   W_{6}^{2}   W_{6}^{5}).
\end{IEEEeqnarray*}
It can be shown that $\theta_{BB'}(\tau)=\theta_{BB''}(\tau)=
\theta_{B'B''}(\tau)=0$, $\forall~|\tau|\leq T'=2$ and thus the
inter-set zero-CC zone width is 2. Moreover, the set $\mb{C}\defeq
\bigcup_{i=0}^{2}\mb{C}_i$ is a bound-achieving $(18,6,2)$ ZCZ
sequence set. \label{ex:subs_direct}
\end{example}
\begin{example}(Length-$12$ PS-like sequences)~
The set of three PS-like sequences
\begin{IEEEeqnarray*}{rCl}
P_0&=&(W^0_3 0 0 0 W^0_3 0 0 0 W^0_3 0 0 0), \nonumber \\
P_1&=&(W^0_3 0 0 0 W^1_3 0 0 0 W^2_3 0 0 0), \nonumber\\
P_2&=&(W^0_3 0 0 0 W^2_3 0 0 0 W^1_3 0 0 0)
\label{eq:three_W3_seqs}
\end{IEEEeqnarray*}
is generated by using $N_r=3$, $N'=4$, $B=(1 0 0 0 1 0 0 0 1 0 0 0
)$, and IDFT matrix $\mathbf{U}=\mathbf{F}_3^H$. Filtering them by
$A=(1, 0, 0, 1, 0, 0,1,0, 0,$ $ -1, 0, 0)$, we obtain the ZCZ
sequences
\begin{IEEEeqnarray}{rCl}
C_0&=&P_0 \circ A 
=(W_6^0 W_6^0 W_6^0W_6^3 W_6^0
W_6^0 W_6^0  W_6^3 W_6^0  W_6^0  W_6^0  W_6^3), \nonumber\\
C_1&=&P_1 \circ A 
=(W_6^0 W_6^2 W_6^4 W_6^3
W_6^2 W_6^4 W_6^0 W_6^5 W_6^4 W_6^0 W_6^2 W_6^1), \nonumber\\
C_2&=&P_2 \circ A 
=(W_6^0 W_6^4 W_6^2 W_6^3 W_6^4 W_6^2 W_6^0 W_6^1 W_6^2 W_6^0
W_6^4 W_6^5).~~~~
\label{eq:PS-like_2}
\end{IEEEeqnarray}
It is verifiable that $\forall~ i,j$, $i\neq j$,
\begin{IEEEeqnarray}{rCl}
 \theta_{C_iC_j}(\tau)=0,~~~
 |\theta_{C_iC_i}(\tau)|=12 \delta(|\tau|_4),
\end{IEEEeqnarray}
i.e., $\mathbf{C}=\{C_0, C_1, C_2\}$ is a $(12,3,3)$
bound-achieving ZCZ sequence set. This set also possesses the same
PS sequence correlation properties \cite{PS}. Moreover, both
(\ref{eq:PS-like_1}) and (\ref{eq:PS-like_2}) require only $1/3$
and $1/2$ of the alphabet size required by the original PS
construction under the same sequence period constraint.
\end{example}

\begin{example}(Generalized PS sequences)~
Using the method of \textit{Theorem \ref{poly_by_direct}} with
$N_r=5$, $N'=3$, the IDFT matrix $\mathbf{U}=\mathbf{F}_5^H$,
$B=(1 0 0 1 0 0 1 0 0 1 0 0 1 0 0)$, and $A=(W_3^0 0000 W_3^2 0000
W_3^0 0000)$, we obtain
\begin{IEEEeqnarray*}{rCl}
C_0
&=&(W_{15}^0  W_{15}^5  W_{15}^0
W_{15}^0  W_{15}^5  
W_{15}^0  W_{15}^0 W_{15}^5
W_{15}^0 W_{15}^0 W_{15}^5
W_{15}^0  W_{15}^0  W_{15}^5  W_{15}^0), \nonumber\\
C_1
&=&(W_{15}^0  W_{15}^{11}
W_{15}^{12} W_{15}^3  W_{15}^{14} 
W_{15}^0
W_{15}^6 W_{15}^2 W_{15}^3 W_{15}^9 W_{15}^5
W_{15}^6  W_{15}^{12} W_{15}^8  W_{15}^9), \nonumber\\
C_2
&=&(W_{15}^0  W_{15}^2  W_{15}^9  W_{15}^6  W_{15}^8  
W_{15}^0 W_{15}^{12}
W_{15}^{14} W_{15}^6  W_{15}^3  W_{15}^5  W_{15}^{12} W_{15}^9  W_{15}^{11} W_{15}^3),\nonumber\\
C_3
&=&(W_{15}^0  W_{15}^8  W_{15}^6  W_{15}^9  W_{15}^2 
W_{15}^0 W_{15}^3
W_{15}^{11} W_{15}^9  W_{15}^{12} W_{15}^5  W_{15}^3  W_{15}^6  W_{15}^{14} W_{15}^{12}),\nonumber\\
C_4
&=&(W_{15}^0  W_{15}^{14} W_{15}^3  W_{15}^{12} W_{15}^{11} 
W_{15}^0 W_{15}^9 W_{15}^8 W_{15}^{12} W_{15}^6  W_{15}^5
W_{15}^9  W_{15}^3  W_{15}^2  W_{15}^6)~
\end{IEEEeqnarray*}
which constitute a set of $(15,5,2)$ bound-achieving generalized
PS sequences that has the same correlation properties as the
original PS sequences, i.e., $\forall~ i,j$, $i\neq j$,
\begin{IEEEeqnarray}{rCl}
\theta_{C_iC_j}(\tau)~&=0,~~~ |\theta_{C_iC_i}(\tau)|&=15
\delta(|\tau|_3).
\end{IEEEeqnarray}
As mentioned before, the PS method \cite{PS} can not produce ZCZ
sequences of length $N=15$.
\end{example}

Previous examples are constructed by using coprime $N_r$ and $N'$,
we show a set using the construction (\ref{eq:direct_notprime}).
\begin{example}(Sets based non-coprime parameters and nested-like
sets using orthogonal tone decomposition)
~ By choosing $N_r=4$, $N'=6$ and upsampling the
Sylvester Hadamard $\mathbf{H}_4$ by $B=(1 0 0 0 0 0 1 0 0 0 0 0 0
1 0 0 0 0 0 1 0 0 0 0)$, we obtain a $(24,4,4)$ ZCZ sequence family
by filtering each row of $\mathbf{P}=\mathbf{H}_4 \vartriangle B$
through $A=(W_{12}^{0}000W_{12}^{1}000W_{12}^{4}000 W_{12}^{9} 0 0
0 W_{12}^{4} 0 0 0 W_{12}^{1} 0 0 0)$:
\begin{IEEEeqnarray*}{rCl}
C_0=P_0 \circ A 
&=&\hspace{-.2em} (W_{12}^{0}  W_{12}^{3}  W_{12}^{11} W_{12}^{8}
W_{12}^{11} W_{12}^{8}  W_{12}^{0}  W_{12}^{3}  W_{12}^{8}
W_{12}^{11}
W_{12}^{11} W_{12}^{8}\nonumber \\
&&~\hspace{-.2em} W_{12}^{3}  W_{12}^{0} W_{12}^{8} W_{12}^{11}
W_{12}^{8} W_{12}^{11} W_{12}^{3} W_{12}^{0} W_{12}^{11}
W_{12}^{8} W_{12}^{8}  W_{12}^{11}), \nonumber\\
C_1=P_1 \circ A 
&=&\hspace{-.2em} (W_{12}^{0}  W_{12}^{3}  W_{12}^{5}  W_{12}^{2}
W_{12}^{11} W_{12}^{8}  W_{12}^{6}  W_{12}^{9}  W_{12}^{8}
W_{12}^{11} W_{12}^{5}  W_{12}^{2}\nonumber \\
&&~\hspace{-.2em} W_{12}^{3}  W_{12}^{0} W_{12}^{2}  W_{12}^{5}
W_{12}^{8}  W_{12}^{11} W_{12}^{9} W_{12}^{6}  W_{12}^{11}
W_{12}^{8}  W_{12}^{2}  W_{12}^{5}), \nonumber\\
C_2=P_2 \circ A 
&=&\hspace{-.2em} (W_{12}^{0}  W_{12}^{9}  W_{12}^{11} W_{12}^{2}
W_{12}^{11} W_{12}^{2}  W_{12}^{0}  W_{12}^{9}  W_{12}^{8}
W_{12}^{5}  W_{12}^{11} W_{12}^{2}\nonumber \\
&&~\hspace{-.2em} W_{12}^{3}  W_{12}^{6} W_{12}^{8}  W_{12}^{5}
W_{12}^{8}  W_{12}^{5}  W_{12}^{3} W_{12}^{6}  W_{12}^{11}
W_{12}^{2}  W_{12}^{8}  W_{12}^{5}), \nonumber\\
C_3=P_3 \circ A 
&=&\hspace{-.2em} (W_{12}^{0}  W_{12}^{9}  W_{12}^{5}  W_{12}^{8}
W_{12}^{11} W_{12}^{2}  W_{12}^{6}  W_{12}^{3}  W_{12}^{8}
W_{12}^{5}  W_{12}^{5}  W_{12}^{8}\nonumber \\
&&~\hspace{-.2em} W_{12}^{3}  W_{12}^{6} W_{12}^{2}  W_{12}^{11}
W_{12}^{8}  W_{12}^{5}  W_{12}^{9} W_{12}^{0}  W_{12}^{11}
W_{12}^{2}  W_{12}^{2}  W_{12}^{11}).
\end{IEEEeqnarray*}

Alternatively, we can perform orthogonal tone decomposition on $B$
to obtain two weight-$2$ basic sequences of same minimum spacing
$10$:
\begin{IEEEeqnarray*}{rCl}
B_0&=&(100000000000010000000000),\\
B_1&=&(000000100000000000010000).
\end{IEEEeqnarray*}
With $\mb{U}_0=\mb{H}_2$ and
\begin{IEEEeqnarray*}{rCl}
\mb{U}_1&=&\left[
\begin{array}{cc}
1 & j\\
j & 1\\
\end{array}
\right],
\end{IEEEeqnarray*}
we filter rows of
$\mathbf{U}_0\vartriangle B_0$ and
$\mathbf{U}_1\vartriangle B_1$ by \cite{poly_perf_prop_2}
\begin{IEEEeqnarray*}{rCl}
A&=&(W_{6}^{0} 0 W_{6}^{0} 0 W_{6}^{3}0 W_{6}^{2}0
W_{6}^{4}0 W_{6}^{2}0  
W_{6}^{3}0  W_{6}^{6}0
W_{6}^{6}0  W_{6}^{2}0 W_{6}^{1}0 W_{6}^{2}0)
\end{IEEEeqnarray*}
to obtain two smaller polyphase sets of larger ZCZ width,
$\mb{C}_0=\{C_{0,0},C_{0,1}\}$  and $\mb{C}_1=\{C_{1,0},C_{1,1}\}$,
where
\begin{IEEEeqnarray*}{rCl}
C_{0,0}&=& (
W_{6}^{0}  W_{6}^{3}  W_{6}^{4} W_{6}^{4}
W_{6}^{5} W_{6}^{2}  W_{6}^{4}  W_{6}^{4}  %
W_{6}^{0} W_{6}^{3} W_{6}^{0} W_{6}^{0}\\&&~
W_{6}^{3}  W_{6}^{0} W_{6}^{4} W_{6}^{4}
W_{6}^{2} W_{6}^{5} W_{6}^{4} W_{6}^{4} W_{6}^{3}
W_{6}^{0} W_{6}^{0}  W_{6}^{0}),\\
C_{0,1}&=& (W_{6}^{0}  W_{6}^{0}  W_{6}^{4} W_{6}^{1}
W_{6}^{5} W_{6}^{5}  W_{6}^{4}  W_{6}^{1}  %
W_{6}^{0} W_{6}^{0} W_{6}^{0} W_{6}^{3}\\&&~
W_{6}^{3}  W_{6}^{3} W_{6}^{4} W_{6}^{1} 
W_{6}^{2} W_{6}^{2} W_{6}^{4} W_{6}^{1} W_{6}^{3}
W_{6}^{3} W_{6}^{0}  W_{6}^{3}),\\
C_{1,0}&=& (W_{12}^{8}  W_{12}^{11}  W_{12}^{6} W_{12}^{3}
W_{12}^{0} W_{12}^{3}  W_{12}^{0}  W_{12}^{9}%
W_{12}^{8}
W_{12}^{11} W_{12}^{10} W_{12}^{7}\\&&~   W_{12}^{8}  W_{12}^{11} W_{12}^{0} W_{12}^{9} 
W_{12}^{0} W_{12}^{3} W_{12}^{6} W_{12}^{3} W_{12}^{8}
W_{12}^{11} W_{12}^{4}  W_{12}^{1}),\\
C_{1,1}&=& (W_{12}^{11}  W_{12}^{8}  W_{12}^{9} W_{12}^{0}
W_{12}^{3} W_{12}^{0}  W_{12}^{3}  W_{12}^{6}  %
W_{12}^{11} W_{12}^{8} W_{12}^{1} W_{12}^{4} \\&&~
W_{12}^{11}  W_{12}^{8} W_{12}^{3} W_{12}^{6} 
W_{12}^{3} W_{12}^{0} W_{12}^{9} W_{12}^{0} W_{12}^{11}
W_{12}^{8} W_{12}^{7}  W_{12}^{10}).
\end{IEEEeqnarray*}
Both sets are $(24,2,10)$ ZCZ sequence sets and together they form
another $(24,4,4)$ set.
\end{example}
\begin{example}(Generalized PS sequence set)~ We can also derive a
smaller {\it generalized PS sequence} set having the same period
but a larger ZCZ width. For example, if we choose $N'=8$, $N_r=3$,
(\ref{eq:direct_prime}), and use the QPSK perfect sequence \cite{Mow_diss}
\begin{IEEEeqnarray*}{rCl}
A'=(W_4^1W_4^1W_4^2W_4^1W_4^1W_4^3W_4^2W_4^3),
\end{IEEEeqnarray*}
then the three sequences
\begin{IEEEeqnarray*}{rCl}
C_0&=&(W_{12}^{9}  W_{12}^{3}  W_{12}^{6}  W_{12}^{3}  W_{12}^{9}
W_{12}^{9}  W_{12}^{6}  W_{12}^{9}  W_{12}^{9}  W_{12}^{3}
W_{12}^{6}  W_{12}^{3} \nonumber \\
&&~W_{12}^{9}  W_{12}^{9} W_{12}^{6} W_{12}^{9}  W_{12}^{9}
W_{12}^{3}  W_{12}^{6}
W_{12}^{3} W_{12}^{9}  W_{12}^{9}  W_{12}^{6}  W_{12}^{9} ),\nonumber \\
C_1&=&(W_{12}^{9}  W_{12}^{7}  W_{12}^{2}  W_{12}^{3}  W_{12}^{1}
W_{12}^{5}  W_{12}^{6}  W_{12}^{1}  W_{12}^{5}  W_{12}^{3}
W_{12}^{10} W_{12}^{11} \nonumber \\
&&~W_{12}^{9}  W_{12}^{1} W_{12}^{2} W_{12}^{9}  W_{12}^{1}
W_{12}^{11} W_{12}^{6}
W_{12}^{7} W_{12}^{5}  W_{12}^{9}  W_{12}^{10} W_{12}^{5}), \nonumber \\
C_2&=&(W_{12}^{9}  W_{12}^{11} W_{12}^{10} W_{12}^{3}  W_{12}^{5}
W_{12}^{1}  W_{12}^{6}  W_{12}^{5}  W_{12}^{1}  W_{12}^{3}
W_{12}^{2}  W_{12}^{7} \nonumber \\
&&~W_{12}^{9}  W_{12}^{5} W_{12}^{10} W_{12}^{9}  W_{12}^{5}
W_{12}^{7}  W_{12}^{6} W_{12}^{11} W_{12}^{1}  W_{12}^{9}
W_{12}^{2} W_{12}^{1})
\end{IEEEeqnarray*}
constitute an $(N,N_r,N'-1)=(24,3,7)$ bound-achieving ZCZ family.
A family with such ZCZ parameter values can not be generated by
the method suggested in \cite{PS}.
\end{example}

\begin{example}(Length-$16$ ternary and binary sequences)~
Using the basic sequence $B=(1 0 0 0 0$ $0 0 1 0 0 1 0 0 1 0 0)$,
the Sylvester Hadamard matrix $\mathbf{H}_4$ as $\mathbf{U}$, and the perfect
sequence $A=(+\:0 0 0 + 0 0 0 + 0 0 0 - 0 0 0)$, we obtain
\begin{IEEEeqnarray*}{rCl}
P_0&=&(+\: 0 0 0 0 0 0 + 0 0 + 0 0 + 0 0), \nonumber \\
P_1&=&(+\: 0 0 0 0 0 0 - 0 0 + 0 0 - 0 0), \nonumber \\
P_2&=&(+\: 0 0 0 0 0 0 + 0 0 - 0 0 - 0 0), \nonumber \\
P_3&=&(+\: 0 0 0 0 0 0 - 0 0 - 0 0 + 0 0),
\end{IEEEeqnarray*}
where $+$ and $-$ denote $+1$ and $-1$, respectively. Time domain
sequences with zero entries are often undesirable as they require
on-off switching. Filtering $\{P_i\}$ by $A$, we obtain the binary
$(16,4,2)$ ZCZ sequence family consisting of
\begin{IEEEeqnarray*}{rCl}
C_0&=&P_0 \circ A = (+ - + + - + + +
 + + + - + + - \:+\hspace{.1em}),\nonumber\\
C_1&=&P_1\circ A =(+ + + - - - +
- + - + + + - - \:-\hspace{.1em}),\nonumber\\
C_2&=&P_2\circ A=(+ + - + - - -
+ + - - - + - + \:+\hspace{.1em}),\nonumber\\
C_3&=&P_3\circ A=(+ - - - - + - -
+ + - + + + + \:-\hspace{.1em}).
\end{IEEEeqnarray*}
\label{ex:BiDirect1}
\end{example}
\begin{example}(Length-$32$ binary sequence set)~
Let $N_r=8$ and $N'=4$. With $B=(1 0 0 0$ $ 1 0 0 0 0 0 0 1 0 0 01 0 0 1
0 0 0 1 0 0 1 0 0 0 1 0 0)$, $A=(+\: 0 0 0 0 0 0 0 + 0 0 0 0 0
0 0  +  0 0 0 0 0 0 0 - 0 0 0 0 0 0 0 )$ and $\mb{U}=\mathbf{H}_8$, we obtain
the binary $(32,8,2)$ ZCZ sequence set
\begin{IEEEeqnarray*}{rCl}
C_0=(\hspace{-.2em}&+&   -  +   +   +   -  +   +   -  +   +   + -
+ + +
++   +   -  +   +   +   -  +   +   -  +   + + -
+\hspace{.1em}
),\nonumber\\
C_1=(\hspace{-.2em}&+&   -  +   +   -  +   -  -  -  +   +   +   +
-  - -
++ + -  -  -  -  +   +   +   -  +   -  -  +
-\hspace{.1em}
),\nonumber\\
C_2=(\hspace{-.2em}&+&   +   +   -  +   +   +   -  -  -  +   -  -
-  + -
+ - + +   +   -  +   +   +   -  -  -  +   -  -
-\hspace{.1em}
),\nonumber\\
C_3=(\hspace{-.2em}&+&   +   +   -  -  -  -  +   -  -  +   -  +
+ - +
+ - + + -  +   -  -  +   -  -  -  -  +   +
+\hspace{.1em}
),\nonumber\\
C_4=(\hspace{-.2em}&+&   +   -  +   +   +   -  +   -  -  -  +   -
- - +
+ - - - +   -  -  -  +   -  +   +   +   -  + +
\hspace{.1em}
),\nonumber\\
C_5=(\hspace{-.2em}&+&   +   -  +   -  -  +   -  -  -  -  +   + +
+ -
+ - - - -  +   +   +   +   -  +   +   -  + -  -
\hspace{.1em}),\nonumber\\
C_6=(\hspace{-.2em}&+&   -  -  -  +   -  -  -  -  +   -  -  -  + -
- 
+ + - + + +   -  +   +   +   +   -  +   +   + -
\hspace{.1em}),\nonumber\\
C_7=(\hspace{-.2em}&+&   -  -  -  -  +   +   +   -  +   -  -  + -
+ + 
+ + - + -  -  +   -  +   +   +   -  -  -  - +
\hspace{.1em}).
\end{IEEEeqnarray*}
\label{ex:BiDirect2}
\end{example}

The ZCZ families shown in the above two examples achieve
(\ref{eq:bi_bound}), the bound for binary ($N_{A'}=2$) sequences,
but their ZCZ widths are limited by the facts that there exists
only one binary perfect sequence (whose length $N'=4$) and binary
Hadamard matrices only exists for certain $N_r$; see {\it Remark
\ref{prop:Hada_mtx_L}}. To increase the ZCZ width and have greater
flexibility in choosing the ZCZ parameters, we can use
higher-order constellations ($N_r>2$). For example, quadriphase
perfect sequences of length $N'=2$, $4$, $8$ or $16$ do exist
\cite{poly_perf_prop_2}, \cite{Mow_diss}. We introduce in the next
section an alternate method which offers more choices for the ZCZ
width.

%



\section{Sequences Derived from Complementary Sets of Sequences}
\label{section:comp} In this section, we generalize the above
basic sequence based approach by replacing rows of an unitary
matrix with concatenated sequences. The following definitions can
be found in \cite{complementary}.
\subsection{Basic Definitions}
\begin{definition}
The aperiodic CC function of two length-$L$ sequences $u \equiv
\{u(n)\}$ and $v \equiv \{v(n)\}$ is defined as
\begin{eqnarray}
\psi_{uv}(\tau)= \sum_{n=\tau}^{L-1}u(n)v^*(n-\tau).
\end{eqnarray}
The aperiodic AC function of sequence $u$ is obviously
$\psi_{uu}(\tau)$.
\end{definition}
\begin{definition}
A set of $Q$ equal-length sequences,
$\mathbf{E}=\{E_0,E_1,\cdots,E_{Q-1}\}$, forms a
\textit{complementary set} of sequences (CSS) if and only if
$\forall\tau \neq 0$,
\begin{eqnarray}
 \sum_{i=0}^{Q-1}\psi_{E_{i}E_{i}}(\tau)=0.
\end{eqnarray}
\label{def:comp_set}
\end{definition}
\begin{definition}
A CSS, $\mathbf{F}=\{F_0,F_1,$ $\cdots,F_{Q-1}\}$, is said to be a
\textit{mate} of the CSS, $\mathbf{E}=$
$\{E_0,E_1,\cdots,E_{Q-1}\}$ if
\begin{enumerate}
 \item[(a)]{The lengths of all members in $\mathbf{E}$ and
  $\mathbf{F}$ are the same;}
 \item[(b)]{For all $\tau$,
\begin{equation}
  \sum_{i=0}^{Q-1} \psi_{E_i F_i}(\tau)=0.
\end{equation}}
\end{enumerate}
\label{def:mate}
\end{definition}
\begin{definition}
A collection of complementary sets of sequences $\{\mathbf{E}_0,
\mathbf{E}_1,\cdots,\mathbf{E}_{K-1}\}$, where each set contains
the same number of sequences, is said to be \textit{mutually
orthogonal} if every two sets in the collection are mates of each
other. \label{def:mut_ortho}
\end{definition}

It has been proved in \cite{generalcom} that
\begin{corollary}
The number of mutually orthogonal CSS's (MOCSS's) $K$ cannot
exceed the cardinality of member CSS, $Q$, i.e., $K\leq Q$.
\end{corollary}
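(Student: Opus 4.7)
The plan is to recast the aperiodic correlation conditions in Definitions 10 and 11 as orthogonality statements in the frequency domain, reducing the corollary to a dimension-counting argument in $\mathbb{C}^{Q}$.

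First, for each CSS $\mathbf{E}_{k}=\{E_{k,0},\ldots,E_{k,Q-1}\}$ (whose members share a common length $L$), I would introduce the discrete-time Fourier transforms $\widehat{E}_{k,i}(\omega)=\sum_{n=0}^{L-1}E_{k,i}(n)e^{-j\omega n}$. Since $\psi_{E_{k,i}E_{k,i}}(\tau)$ is precisely the aperiodic autocorrelation of the finite sequence $E_{k,i}$, its Fourier transform equals $|\widehat{E}_{k,i}(\omega)|^{2}$. Summing over $i$ and using the CSS property $\sum_{i}\psi_{E_{k,i}E_{k,i}}(\tau)=0$ for $\tau\neq0$ would cause every nonzero-lag term to cancel, leaving
\begin{equation}
\sum_{i=0}^{Q-1}|\widehat{E}_{k,i}(\omega)|^{2}=\sum_{i=0}^{Q-1}\|E_{k,i}\|^{2}\defeq C_{k},\qquad\forall\,\omega,
\end{equation}
a positive constant as long as the set is nontrivial.

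Next, for two mated sets $\mathbf{E}_{k}$ and $\mathbf{E}_{k'}$ with $k\neq k'$, the mate condition $\sum_{i}\psi_{E_{k,i}E_{k',i}}(\tau)=0$ for all $\tau$ would Fourier-transform to
\begin{equation}
\sum_{i=0}^{Q-1}\widehat{E}_{k,i}(\omega)\,\widehat{E}_{k',i}^{*}(\omega)=0,\qquad\forall\,\omega.
\end{equation}
Fixing any $\omega_{0}$ and forming the vector $\mathbf{v}_{k}(\omega_{0})=\bigl(\widehat{E}_{k,0}(\omega_{0}),\ldots,\widehat{E}_{k,Q-1}(\omega_{0})\bigr)^{T}\in\mathbb{C}^{Q}$, the two displays above say that $\mathbf{v}_{0}(\omega_{0}),\ldots,\mathbf{v}_{K-1}(\omega_{0})$ are pairwise orthogonal in $\mathbb{C}^{Q}$, each with squared norm $C_{k}>0$. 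Standard linear algebra then yields $K\leq Q$.

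The main obstacle is bookkeeping rather than conceptual: verifying that the aperiodic-correlation-to-Fourier translation is clean (the indexing of $\tau$ over $-(L-1)\leq\tau\leq L-1$ needs to be matched against Definition 10, which only writes the nonnegative-lag branch, via the standard identity $\psi_{uv}(-\tau)=\psi_{vu}^{*}(\tau)$) and that $C_{k}>0$ holds under the implicit nontriviality assumption on the member sequences. Once these are in place the dimensional bound follows immediately, independently of the specific Hadamard-matrix constructions developed earlier in the paper.
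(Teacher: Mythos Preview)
Your argument is correct. The paper, however, does not supply its own proof of this corollary: it simply attributes the result to \cite{generalcom} (Schweitzer's dissertation) and states it without argument. So there is nothing in the paper to compare your proof against line by line.

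That said, the route you take---passing to the discrete-time Fourier transforms (equivalently, the associated polynomials on the unit circle), reading the complementary condition as $\sum_i|\widehat{E}_{k,i}(\omega)|^2=C_k>0$ and the mate condition as $\sum_i\widehat{E}_{k,i}(\omega)\widehat{E}_{k',i}^*(\omega)=0$, then fixing a frequency and counting dimensions in $\mathbb{C}^Q$---is precisely the classical proof that appears in the foundational references \cite{generalcom,complementary}. Your bookkeeping remarks are also on point: Definition~10 only writes the nonnegative-lag branch, but the conjugate symmetry $\psi_{uv}(-\tau)=\psi_{vu}^*(\tau)$ supplies the rest, and the positivity $C_k>0$ follows once any member sequence is nonzero. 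Nothing is missing.
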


\subsection{Synthesis Procedure}
We now extend the nonuniform upsampling operation defined in
\textit{Definition \ref{def:nonuni_upsamp}}.
\begin{definition}
Let $V$ be a length-$N$ binary sequence with $w_H(V)=Q$ and
$\mathcal{E}=\{\mathbf{E}_0,\mathbf{E}_1,\cdots,\mathbf{E}_{K-1}\}$
be a collection of $K$ MOCSS's in which each CSS ${\mathbf E}_i$
consists of $Q$ length-$L$ sequences, i.e., ${\mathbf
E}_i=\{E_{i,0}, E_{i,1},\cdots,E_{i,Q-1}\}$, where
$E_{i,j}=\left(e_{i,j}(0),e_{i,j}(1),\cdots,e_{i,j}(L-1)\right)$.

The $V$-upsampled concatenated sequence based on $\mathbf{E}_i$,
$G_i=\mathbf{E}_i \vartriangle_c V=
(g_i(0),g_i(1),\cdots,g_i(N+Q(L-1)-1))$ is defined by
 \begin{eqnarray}
 g_{i}(n)= \left\{
 \begin{array}{ll}
 e_{i,j}(m), & n=j(L-1)+s_V(j)+m,\\
 0, & \mbox{otherwise}, \\
 \end{array}
 \right.
 \end{eqnarray}
where $s_V(j)$ is given in {\it Definition \ref{def:nonuni_upsamp}}.
\label{def:nonuni_upsamp_comp}
\end{definition}
The operator $\vartriangle_c$ is similar to $\vartriangle$: the
latter operates on rows of a matrix while the former operates on
the sequence formed by concatenating members of the set
$\mathbf{E}_i$ and replaces each nonzero element of a basic
sequence by a finite-length sequence.
\begin{lemma}
Let $\mathcal{E}=\{\mathbf{E}_0,\mathbf{E}_1,\cdots,$
$\mathbf{E}_{K-1}\}$ be a collection of $K$ MOCSS's in which each
set ${\mathbf E}_i$ has $Q$ length-$L$ sequences and $B$ be a
basic $(N,T)$ sequence of weight $Q$. The set $\mathbf{G}=\{{\bf
E}_i \vartriangle_c B\}\defeq\{G_0, G_1, \cdots,G_{K-1}\}$ forms
an $(N+Q(L-1), K,T)$ ZCZ sequence family. \label{comp}
\end{lemma}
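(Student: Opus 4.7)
The plan is to directly unpack the definition of the periodic cross-correlation of the upsampled concatenated sequences and reduce it to the aperiodic correlations of the constituent CSS members, then invoke the MOCSS/CSS conditions.

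First I would verify the length of each $G_i$ is $N+Q(L-1)$, which is immediate from \textit{Definition \ref{def:nonuni_upsamp_comp}}: inside $G_i$ the $k$th block occupies the $L$ consecutive positions $k(L-1)+s_B(k),\,k(L-1)+s_B(k)+1,\,\ldots,\,k(L-1)+s_B(k)+L-1$, for $k=0,1,\ldots,Q-1$. I would then prove the crucial geometric fact that any two consecutive blocks inside $G_i$ are separated by at least $T$ zero positions, in the \emph{circular} sense. For non-wraparound neighbors, the gap between the end of block $k$ and the start of block $k{+}1$ equals $s_B(k+1)-s_B(k)-1\ge T$ by \textit{Lemma \ref{basic_seq}}; for the wraparound pair (end of block $Q{-}1$, start of block $0$) the gap works out to $s_B(0)+N-s_B(Q-1)-1$, which is again $\ge T$ because $B$ has minimum spacing $T$ circularly.

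Next I would compute $\theta_{G_iG_j}(\tau)$ for $|\tau|_{N+Q(L-1)}\le T$. A term $g_i(n)g_j^*(n-\tau)$ is nonzero only when both coordinates lie in blocks; writing $n=k(L-1)+s_B(k)+m_1$ and $n-\tau=k'(L-1)+s_B(k')+m_2$ with $m_1,m_2\in[0,L-1]$, the gap estimate above forces $k=k'$, otherwise $|\tau|$ would exceed $T$. Hence only ``aligned'' block pairs contribute, and I can collect the surviving terms as
\begin{equation}
\theta_{G_iG_j}(\tau)=\sum_{k=0}^{Q-1}\psi_{E_{i,k}E_{j,k}}(\tau),\qquad |\tau|\le T,
\end{equation}
where $\psi$ is the aperiodic CC from \textit{Definition 11}.

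The conclusion then follows by applying the two hypotheses. For $i\ne j$, \textit{Definition \ref{def:mate}} combined with the mutual orthogonality of $\mathcal{E}$ gives $\sum_k\psi_{E_{i,k}E_{j,k}}(\tau)=0$ for every $\tau$, so $\theta_{G_iG_j}(\tau)=0$ on the ZCZ. For $i=j$ and $\tau\ne 0$, the CSS property (\textit{Definition \ref{def:comp_set}}) gives $\sum_k\psi_{E_{i,k}E_{i,k}}(\tau)=0$, while at $\tau=0$ the sum equals the total energy of $\mathbf{E}_i$, producing an impulse-like AC. I would treat negative $\tau$ either by symmetry $\theta_{G_iG_j}(-\tau)=\theta_{G_jG_i}^{*}(\tau)$ or by a mirrored block-alignment argument. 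The main obstacle, and the only step that needs genuine care, is the gap/wraparound bookkeeping that localizes overlap to matching blocks; everything after that is a transparent application of the complementary-set definitions.
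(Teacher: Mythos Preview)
Your proposal is correct and follows essentially the same route as the paper's proof: write out the block structure of $G_i$, use the minimum-spacing property of $B$ (including the wraparound gap $s_B(0)+N-s_B(Q-1)-1\ge T$) to force $k=k'$ whenever $|\tau|\le T$, reduce $\theta_{G_iG_j}(\tau)$ to $\sum_{k}\psi_{E_{i,k}E_{j,k}}(\tau)$, and then invoke \textit{Definitions \ref{def:comp_set}}, \textit{\ref{def:mate}}, and \textit{\ref{def:mut_ortho}}. Your version is in fact more explicit than the paper's about the block-alignment bookkeeping, but the argument is the same.
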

\begin{proof}
Based on \textit{Lemma \ref{basic_seq}} and \textit{Definition
\ref{def:nonuni_upsamp_comp}} we can express $G_i$ as
\begin{IEEEeqnarray*}{rCl}
 G_i=(\underbrace{0 \cdots 0}_{s_V(0)}~
      &&\underbrace{e_{i,0}(0) \cdots  e_{i,0}(L-1)}_L \underbrace{0 \cdots 0 }_{s_V(1)-s_V(0)-1}\nonumber\\
      &&\underbrace{e_{i,1}(0) \cdots  e_{i,1}(L-1)}_L \underbrace{0 \cdots 0 }_{s_V(2)-s_V(1)-1}\nonumber\vspace{-.8em}\\
      &&~~~~~~\vdots\nonumber\vspace{-.5em}\\
      &&\underbrace{e_{i,Q-1}(0) \cdots e_{i,Q-1}(L-1)}_L
      \overbrace{0 \cdots 0}^{N-s_V(Q-1)-1}\hspace{-1em}),\nonumber
\end{IEEEeqnarray*}
where $s_V(j)-s_V(j-1)-1\geq T$, $j=0,1,\cdots,Q-1$, and
$s_V(0)-s_V(Q-1)+N-1\geq T$. Invoking \textit{Definitions
\ref{def:mate}} and \textit{\ref{def:mut_ortho}}, we obtain,
for all $i\neq k$,
\begin{IEEEeqnarray}{rCl}
 \theta_{G_iG_k}(\tau)
  =\sum_{j=0}^{Q-1}
  \psi_{E_{i,j}E_{k,j}}(\tau)=0,~|\tau|_N\leq T.
\end{IEEEeqnarray}
By analogy, \textit{Definition \ref{def:comp_set}} gives, for all
$i$, $\theta_{G_iG_i}(\tau)=\sum_{j=0}^{Q-1}
\psi_{E_{i,j}E_{i,j}}(\tau)=0$, $0 < |\tau|_N\leq T$. Therefore,
$\mathbf{G}$ forms an $(N+Q(L-1),K,T)$ ZCZ sequence set.
\end{proof}

\subsection{Polyphase ZCZ Sequences}
Following the idea described in Section \ref{section:direct_poly},
we can derive another class of polyphase ZCZ sequence families by
using suitable perfect and basic sequences. The proof of the next
corollary is similar to that of {\it Theorem \ref{poly_by_direct}}
and is given in the last two paragraphs of \ref{app:pf_direct}.
\begin{corollary}
Let $A$ be the length-$L N$ perfect sequence obtained by $L
N_r$-fold upsampling on a length-$N'$ perfect $N_{A'}$-PSK
sequence, $A'$, where $N=N_rN'$ and $2 \leq N_{A'}\leq N'$. Denote
by $\mathcal{E}=\{\mathbf{E}_0,\mathbf{E}_1, \cdots,
\mathbf{E}_{K-1}\}$ a collection of $K$ MOCSS's, where $K\leq N_r$
and each CSS ${\mathbf
E}_i=\{E_{i,0},E_{i,1},\cdots,E_{i,N_r-1}\}$ contains $N_r$
length-$L$ $N_c$-PSK sequences. An $(L N,K,T)$ ZCZ $M$-PSK
sequence set, $M=\text{lcm}(N_{A'},N_c)$, with $T=L(N'-2)$ if
$\text{gcd}(N_r,N')\neq1$ or $T=L(N'-1)$ if $\text{gcd}(N_r,N')=1$
can be obtained by the following steps:
\begin{enumerate}
\item[1)] Generate $K$ length-$(N+N_r(L-1))$ sequences
$G'_i=\mathbf{E}_i \vartriangle_c B$, $i=0,1,\cdots, K-1$, where
$B$ is the weight-$N_r$ basic sequence of length $N$ defined by
(\ref{eq:direct_notprime}) if $\text{gcd}(N_r,N')\neq1$ or by
(\ref{eq:direct_prime}) if $\text{gcd}(N_r,N')=1$.

\item[2)]Replace each zero in ${G}'_i$ by a length-$L$ all-zero
sequence to obtain the augmented sequence $G_i$.

\item[3)]{Filter each $G_i$ by $A$.}
\end{enumerate}

\label{poly_by_comp}
\end{corollary}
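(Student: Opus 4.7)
The plan is to verify the three-step construction in order, propagating the ZCZ property through each step and establishing the polyphase claim at the end. The proof naturally parallels the argument for \textit{Theorem \ref{poly_by_direct}}, with \textit{Lemma \ref{comp}} replacing the role of \textit{Lemma \ref{direct_syn}}.

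First, I would invoke \textit{Lemma \ref{comp}} to establish that $\{G'_i\}_{i=0}^{K-1}$ forms an $(N+N_r(L-1), K, T_B)$ ZCZ family, where $T_B$ is the minimum spacing of the basic sequence $B$. Reading off (\ref{eq:direct_prime}) and (\ref{eq:direct_notprime}), this spacing is $N'-1$ in the coprime case and $N'-2$ in the non-coprime case.

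Second, I would quantify the effect of step 2 (the zero-expansion). After replacing each zero in $G'_i$ by $L$ zeros, the resulting length-$LN$ sequence $G_i$ consists of $N_r$ blocks of length $L$ (the CSS members $E_{i,j}$) placed at positions $jLN'$, separated by runs of $L(N'-1)$ (or $L(N'-2)$) zeros. Writing $G_i$ as a sum of shifted CSS members yields the decomposition
\[
\theta_{G_iG_k}(\tau) = \sum_{j,j'} \psi_{E_{i,j}E_{k,j'}}\bigl(\tau-(j-j')LN'\bigr),
\]
and for $|\tau|_{LN}$ less than or equal to the minimum inter-block gap, only $j=j'$ terms survive. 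The CSS and mate-CSS (MOCSS) conditions in \textit{Definitions \ref{def:comp_set}} and \textit{\ref{def:mut_ortho}} then force these sums to vanish, giving an $(LN,K,L(N'-1))$ or $(LN,K,L(N'-2))$ ZCZ family respectively.

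Third, step 3 is handled by \textit{Lemma \ref{mod_inv}}: since $A$ is the $LN_r$-fold upsampling of a perfect sequence $A'$ (hence itself perfect of length $LN$), filtering preserves the ZCZ width. What remains is the polyphase claim, which I expect to be the main obstacle and the heart of the proof. The strategy is to show that for each lag $\tau$, the filtering sum $(G_i\circ A)(\tau)=\sum_n G_i(n)A^*(n-\tau)$ contains at most one nonzero term. The nonzero supports of $G_i$ and shifted $A$ give the congruence $L(jN'-kN_r)\equiv \tau-m \pmod{LN}$ with $m\in[0,L-1]$, $j\in[0,N_r-1]$, $k\in[0,N'-1]$; divisibility forces $m\equiv\tau\pmod L$, reducing to $jN'-kN_r\equiv \tau'\pmod N$. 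In the coprime case, CRT yields at most one solution $(j,k)$, so $(G_i\circ A)(\tau)$ is a single product of an $N_c$-PSK value $E_{i,j}(m)$ and an $N_{A'}$-PSK value $A'(k)$, hence lies in the $M$-PSK set with $M=\mathrm{lcm}(N_{A'},N_c)$. The non-coprime case requires extracting, from the specific shift pattern of (\ref{eq:direct_notprime}), a similar single-overlap guarantee; this is the technical step I would need to grind through most carefully, likely by case analysis on the residue of $\tau$ modulo $L_0=\mathrm{lcm}(N_r,N')$.
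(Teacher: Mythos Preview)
Your proposal is correct and follows essentially the same route as the paper. The paper's proof (the last two paragraphs of Appendix~A) is terser: it observes that $G_i$ is the $L$-fold expansion of the row $P$ from \textit{Theorem~\ref{poly_by_direct}} (each nonzero entry replaced by a length-$L$ block $E_{i,j}$, each zero by $L$ zeros) and that $A$ is the $L$-fold upsampling of the perfect sequence used there, so the single-nonzero-summand B\'ezout argument from the proof of \textit{Theorem~\ref{poly_by_direct}} carries over after scaling all indices by $L$; the non-coprime case is dismissed as analogous. Your version unpacks the same idea more explicitly---your congruence $L(jN'-kN_r)\equiv\tau-m\pmod{LN}$ together with $m\equiv\tau\pmod L$ is exactly the scaled B\'ezout reduction---and your Step~2 correlation decomposition makes the ZCZ-width claim self-contained rather than relying on the expansion remark. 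One minor redundancy: once you do the Step~2 analysis directly on $G_i$, the preliminary invocation of \textit{Lemma~\ref{comp}} on $G'_i$ is not actually used.
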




We have the following four remarks on the MOCSS-based approach.
\begin{remark} Similar to {\it Remark \ref{prop:shif_new}}, the
basic sequence $B$ can be cyclically shifted to generate different
polyphase ZCZ families with the same ZCZ parameters and alphabet
size. These families can be combined to form a larger family with
smaller ZCZ width. Likewise, the zero-CC zone width between $B$
and its shifted version $B'$ determine the inter-set zero-CC zone
width between the associated families or the ZCZ width of the
combined set. We can also decompose $B$ into several basic
sequences $\{B_i\}$ to generate multiple sets with different ZCZ
widths.
\end{remark}
\begin{remark}
As mentioned in Section \ref{section:direct}, binary sequence sets
constructed by \textit{Theorem \ref{poly_by_direct}} have less
choices in ZCZ width. The construction described in {\it Corollary
\ref{poly_by_comp}} takes advantage of the fact that the member
sequence of an MOCSS exists for many values of $L$ and thus allow
the ZCZ width to be chosen from the set $\{T=2L\}$ with the same
basic sequence $B$, set cardinality $N_r$, and perfect sequence
$A'$.
\end{remark}
\begin{remark}
\cite{comp_ex_1} and \cite{comp_ex_2o} present MOCSS-based methods
for generating binary ZCZ sequences. The approach given in
\cite{comp_ex_2o} was later generalized by \cite{comp_ex_4}. The
ZCZ parameters realizable by these methods can be obtained by
using our approach described above. For example, a method given in
\cite{comp_ex_1} needs to use a class of recursively generated
families of binary CSS $\{\mathbf{\Delta}_n\}$. Expressing a
family of $Q$ MOCSS's in matrix form \cite{complementary}
\begin{IEEEeqnarray}{rCl}\mathbf{\Delta}_1
   &\defeq&\left[
   \begin{array}{cccc}
    E_{0,0} & E_{1,0} & \cdots & E_{Q-1,0}\\
    E_{0,1} & E_{1,1} & \cdots & E_{Q-1,1}\\
    \vdots & \vdots & \ddots & \vdots \\
    E_{0,Q-1} & E_{1,Q-1} & \cdots & E_{Q-1,Q-1}\\
   \end{array}\right]
  \end{IEEEeqnarray}
where $E_{i,j}$ are length-$L$ binary sequences and each row is a
CSS. Then, for $n\geq 2$,
  \begin{IEEEeqnarray}{rCl}
   \mathbf{\Delta}_n&=&\left[
   \begin{array}{cc}
    \mathbf{\Delta}_{n-1}\diamond\mathbf{\Delta}_{n-1} & -\mathbf{\Delta}_{n-1}\diamond\mathbf{\Delta}_{n-1}\\
    -\mathbf{\Delta}_{n-1}\diamond\mathbf{\Delta}_{n-1} & \mathbf{\Delta}_{n-1}\diamond\mathbf{\Delta}_{n-1}
   \end{array}\right],
  \end{IEEEeqnarray}
where $[\mathbf{A}\diamond\mathbf{B}]_{ij}$, the $(i,j)$th entry
of the submatrix $[\mathbf{A}\diamond\mathbf{B}]$, is obtained by
concatenating the two sequences, $[\mathbf{A}]_{ij}$ and
$[\mathbf{B}]_{ij}$. The concatenation of rows of
$\mathbf{\Delta}_n$ forms a $(4^{n-1}LQ,2^{n-1}Q,2^{n-2}L)$ ZCZ
sequence set. On the other hand, by using $A'=(1,1,1,-1)$, the
basic sequence defined by (\ref{eq:direct_notprime}) and the
family of MOCSS $\mathbf{\Delta}_{n}$ with $N_r=2^{n-1}Q$ and
elements of $\mathbf{\Delta}_1$ being length $L/4$ sequences, we
obtain binary ZCZ sequence sets with the parameters
$(4^{n-1}4(L/4)Q,2^{n-1}Q,2^{n-1}(L/4)\cdot2)=(4^{n-1}LQ,2^{n-1}Q,2^{n-2}L)$
via \textit{Corollary \ref{poly_by_comp}}.
\end{remark}
\begin{remark}
Our approach offers more choices in parameter values and thus
produce sets which are not derivable from the methods of
\cite{comp_ex_1,comp_ex_4}. More importantly, we can generate not
only binary but also nonbinary sequences and the ZCZ parameters
for the nonbinary class can be flexibly controlled via $N'$, which
can be any integer and is not affected by the MOCSS chosen.
\end{remark}
In Table \ref{tab:MOCSS_bi} we list key parameters for our and
some other MOCSS-based binary ZCZ sequence set constructions.

\renewcommand{\arraystretch}{1.5}
\begin{table}[t]
 \caption{ZCZ sequence sets
 (each uses a collection of $M$ MOCSS's of $Q$ length-$L$ sequences}
 \centering
 \tabcolsep 0.03in
 \label{tab:MOCSS_bi}
 \footnotesize{
 \begin{tabular}{|C{1in}|C{1in}|C{1in}|C{1in}|
                  C{1in}|} \hline
& \multicolumn{3}{c|}{Deng \cite{comp_ex_1} and Tang
\cite{comp_ex_4}, $n\geq0$} & \textit{Corollary
\ref{poly_by_comp}} \\ \hline
 Sequence length $N$ & $4^nLQ$ & $2^{2n-1}LQ$ & $2LQ$ & $4LQ$ \\ \hline
 Set size $K$ & $2^nM$ & $2^nM$ & $M$ & $M$ \\ \hline
 ZCZ width $T$ & $2^{n-1}L$ & $2^{n-2}L$ & $L$ & $2L$ \\ \hline
 Upper-bound (\ref{eq:bi_bound}) achieved with $M=Q$? &
  Yes & Yes & Yes & Yes \\
  \hline
 Alphabet Size& \multicolumn{3}{c|}{Binary} & Binary and polyphase \\
\hline
 \end{tabular}}
\end{table}
\renewcommand{\arraystretch}{1}

\subsection{Examples of CSS-Based Polyphase ZCZ Sequence Sets}
\label{section:CSS_examples} Two ZCZ sequence construction
examples based on CSS are given in this subsection.
\begin{example} ($\gcd(N_r,N')\neq1$)~
Let $N=16$, $N'=N_r=K=L=4$, $A'=(+++-\hspace{.1em})$, and
$B=(1000000100100100)$ and choose a collection of mutually
orthogonal complementary sets
$\mathcal{E}=\{\mathbf{E}_0,\mathbf{E}_1,\mathbf{E}_2,\mathbf{E}_3\}$
from \cite{complementary}, where
\begin{IEEEeqnarray}{rCl}
\mathbf{E}_0&=&\{ (+++\:+), (--+\:+), (-+-\:+), (+--\:+) \},\nonumber\\
\mathbf{E}_1&=&\{ (++-\:-), (---\:-), (-++\:-), (+-+\:-) \},\nonumber\\
\mathbf{E}_2&=&\{ (-+-\:+), (+--\:+), (+++\:+), (--+\:+) \},\nonumber\\
\mathbf{E}_3&=&\{ (-++\:-), (+-+\:-), (++-\:-), (---\:-) \}.~~~~~
\label{eq:L4_MOCSS}
\end{IEEEeqnarray}
Following the procedure of \textit{Corollary \ref{poly_by_comp}},
we obtain the bound-achieving binary $(64,4,8)$ ZCZ sequence set:
\begin{IEEEeqnarray*}{rCl}
C_0=(++++-++--+-+--++----+-
-+-+-+--+&+&\nonumber\\+++++--+-+-+
++--+++++--++-+---+&+&\hspace{-.1em}),\nonumber\\
C_1=(+   +   -   -   -   +   -   +   -   +   +   -   -   -   -   -
-   -   +   +   +   -        +   -   -   +   +   - - -   -
&-&\nonumber\\+ +   -   -   +   -   +   -   -   +   +   -   +   +
+ + + + - -   +   -   +   -   +   -   -   +
-   -   -   &-&\hspace{-.1em}),\nonumber\\
C_2=(-   +   -   +   +   +   -   -   +   +   +   +   +   -   - +
+   -   +   -   -   -        +   +   +   +   +   + + -   -
&+&\nonumber\\  - +   -   +   -   -   +   +   +   +   +   + - + +
- - + -   + -   -   +   +   -   -   -   -
+ -   - &+&\hspace{-.1em}),\nonumber\\
C_3=(-   +   +   -   +   +   +   +   +   +   -   -   +   -   + -
+   -   -   +   -   -      -   -   +   +   -   - + - +
&-&\nonumber\\ - +   +   -   -   -   -   -   +   +   -   - - + - +
- + + -   -   -   -   -   -   -   +   +   +   -   +
&-&\hspace{-.1em}).\nonumber
\end{IEEEeqnarray*}
With the same $B$, $A'$, and $N_r$ as those used in {\it Example
\ref{ex:BiDirect2}}, this set extends the ZCZ width without
changing the set cardinality.
\end{example}
\begin{example}
($\gcd(N_r,N')=1$)~Using the construction (\ref{eq:direct_prime})
with (\ref{eq:L4_MOCSS}), $A'=(W_3^0 W_3^2 W_3^0)$, $N'=3$,
$N_r=4$, and $L=4$, we can obtain a ZCZ sequence set $\mathbf{C}$
of the same (or larger) $T$ with a shorter sequence period $LN$
and slightly larger constellation:
\begin{IEEEeqnarray*}{rCl}
C_0=(&&W_6^0   W_6^0   W_6^0   W_6^0   W_6^0   W_6^3   W_6^3 W_6^0
W_6^5   W_6^2   W_6^5   W_6^2 W_6^3   W_6^3   W_6^0   W_6^0 W_6^0   W_6^0   W_6^0   W_6^0
W_6^2   W_6^5   W_6^5   W_6^2\nonumber\\
&& W_6^3   W_6^0   W_6^3   W_6^0 W_6^3   W_6^3   W_6^0   W_6^0
W_6^2
W_6^2   W_6^2   W_6^2 W_6^0   W_6^3   W_6^3   W_6^0 W_6^3   W_6^0 W_6^3   W_6^0 W_6^5
W_6^5   W_6^2   W_6^2),\nonumber\\
C_1=(&&W_6^0   W_6^0   W_6^3   W_6^3   W_6^0   W_6^3   W_6^0 W_6^3
W_6^5   W_6^2   W_6^2   W_6^5 W_6^3   W_6^3   W_6^3   W_6^3 W_6^0   W_6^0   W_6^3   W_6^3
W_6^2   W_6^5   W_6^2   W_6^5\nonumber\\
&& W_6^3   W_6^0   W_6^0   W_6^3 W_6^3   W_6^3   W_6^3   W_6^3
W_6^2
W_6^2   W_6^5   W_6^5 W_6^0   W_6^3   W_6^0   W_6^3 W_6^3   W_6^0 W_6^0   W_6^3 W_6^5
W_6^5   W_6^5   W_6^5),\nonumber\\
C_2=(&&W_6^3   W_6^0   W_6^3   W_6^0   W_6^3   W_6^3   W_6^0 W_6^0
W_6^2   W_6^2   W_6^2   W_6^2 W_6^0   W_6^3   W_6^3   W_6^0 W_6^3   W_6^0   W_6^3   W_6^0
W_6^5   W_6^5   W_6^2   W_6^2\nonumber\\
&& W_6^0   W_6^0   W_6^0   W_6^0 W_6^0   W_6^3   W_6^3   W_6^0
W_6^5
W_6^2   W_6^5   W_6^2 W_6^3   W_6^3   W_6^0   W_6^0 W_6^0   W_6^0 W_6^0   W_6^0 W_6^2
W_6^5   W_6^5   W_6^2),\nonumber\\
C_3=(&&W_6^3   W_6^0   W_6^0   W_6^3   W_6^3   W_6^3   W_6^3 W_6^3
W_6^2   W_6^2   W_6^5   W_6^5 W_6^0   W_6^3   W_6^0   W_6^3 W_6^3   W_6^0   W_6^0   W_6^3
W_6^5   W_6^5   W_6^5   W_6^5\nonumber\\
&& W_6^0   W_6^0   W_6^3   W_6^3 W_6^0   W_6^3   W_6^0   W_6^3
W_6^5
W_6^2   W_6^2   W_6^5  W_6^3   W_6^3   W_6^3   W_6^3 W_6^0   W_6^0 W_6^3   W_6^3 W_6^2
W_6^5   W_6^2   W_6^5).\nonumber
\end{IEEEeqnarray*}
\end{example}
It is worth mentioning that the above set cannot be obtained by
using the methods of \cite{comp_ex_1} and \cite{comp_ex_4} and,
moreover, although \textit{Corollary \ref{poly_by_comp}} promises
an $(LN,N_r,L(N'-1))=(48,4,8)$ family, $\mathbf{C}$ is actually a
$(48,4,9)$ one. The larger ZCZ is due to the inherit correlation
properties of MOCSS (\ref{eq:L4_MOCSS})
\begin{equation}
  \sum_{k=0}^{N_r-1} \psi_{E_{i,k}E_{j,|k\pm1|_{N_r}}}(\tau)=0
\end{equation}
for $\tau=\pm(L-1)$, $0\leq i<N_r$, and $0\leq j<N_r$.

\section{Conclusion}\label{section:conclusion}
Three new systematic approaches--a transform domain method and two
direct (time domain) synthesis methods--for generating ZCZ
sequence families have been presented in this paper. The transform
domain approach exploits the cross-correlation function's
transform domain representation and the recursive Kronecker
structure of a class of Hadamard matrices. The two other
approaches begin with simple binary basic ZCZ sequences. Through
progressively fine-tuning steps that include novel basic
sequence-based nonuniform upsampling of unitary matrices or a
collections of MOCSS's, we are able to obtain polyphase sequences
that meet various ZCZ requirements.

The basic sequences are used to ensure that the required ZCZ width
is satisfied during the upsampling process while the transform
domain approach uses the subperiodicity of the Hadamard product of
two transform domain sequences. The orthogonality among rows of
unitary matrices or MOCSS guarantees that the CC value of any two
member sequences at zero lag is zero as well. We take advantage of
the correlation-invariant property of the
filtering-by-perfect-sequence operation to convert a nonconstant
modulus sequence into a polyphase sequence. Judicious choices of
the basic and perfect sequences used and the associated upsampling
rate are crucial in this operation.

Our approaches are conceptually simple and require no
sophisticated algebra but, in some cases, offer more flexibilities
in either the choices of the sequence length, the ZCZ width and/or
the alphabet size needed. We are therefore able to produce
sequence families with the same parameters as those by earlier
proposals as well as some that are not achievable by related known
methods. Finally, for each approach, numerical examples have been
provided to further validate the proposed construction methods.

\appendices
\renewcommand{\thesection}{Appendix \Alph{section}}
\renewcommand{\theequation}{\Alph{section}.\arabic{equation}}
\section{Proofs of Theorem \ref{poly_by_direct} and Corollary
\ref{poly_by_comp}} \label{app:pf_direct} \setcounter{equation}{0}
Let $P=(p_0, p_1,\cdots, p_{N-1})$ be a row of $\mathbf{P}$ and
$C=P\circ A=(c_0,c_1,\cdots, c_{N-1})$, where
$c_n=\sum_{j=0}^{N-1} p_j a_{|j-n|_N}^*$. If we can show that, for
any $n \in [0,N-1]$, one and only one of the $N$ products $\{p_j
a_{|j-n|_N}^*: j=0,\cdots,N-1\}$ is nonzero, then, as both $A$ and
$P$ consist $0$'s and polyphase elements, $C$ is a polyphase
sequence as well. Because of the circular convolution nature of
the filtering operation ({\it Definition \ref{def:mod}}) and the
periodic run property of $P$, we have only to check if this single
nonzero product assertion is valid for $0 < n < N_r$.

For the first construction (\ref{eq:direct_prime}),
gcd$(N_r,N')=1$ and both $N_r$ and $N'$ are positive, hence
$\exists$ unique $a, b\in\mathbb{Z}$ such that $aN'+bN_r=1$, where
one of the integer coefficients $a$ or $b$ must be negative \cite{Bezout}.
Without loss of generality, we assume $b<0$ and multiply both
sides of the above B\'{e}zout's identity by $s$, $0< s <N_r$, to
obtain $saN'=s+sb'N_r$, $b'=-b > 0$. If $sb' \leq N'-1$ then $saN'
<N'N_r=N$ and $sa < N_r$; otherwise, subtract both sides by
$n_0N$, where $n_0=\left\lfloor\frac{sb'N_r}{N}\right\rfloor$ to
obtain $(sa-n_0N_r)N'=s+(sb'-n_0N')N_r$. For both cases, we have,
for each positive $s <N_r$, $\exists$ unique pair of positive
integers $(m,n)$, $0 < m \leq N_r-1$, $0 \leq n \leq N'-1$ such
that $mN'=s+nN_r$ mod $N$. That this property holds for $s=0$ is
obvious.

As for the second construction (\ref{eq:direct_notprime}), we
notice that the basic sequence admits the orthogonal tone
decomposition, $B=\sum_{\ell=0}^{d-1}B_\ell$, where
\begin{IEEEeqnarray}{rCl}
 B_\ell(i)=\left\{%
 \begin{array}{ll}
    b_i, & \hbox{$\ell L_0\leq i<(\ell+1)L_0$;} \\
    0, & \hbox{otherwise.} \\
 \end{array}%
 \right.
\end{IEEEeqnarray}
When $d=\text{gcd}(N_r,N')$, there exists positive integers $a$,
$b'$ such that $aN'=d+b'N_r$. Multiplying both sides by $s$,
$0\leq s<\frac{N}{d}$, we obtain
$(sa-n_0\frac{N_r}{d})N'=sd+(sb'-n_0\frac{N'}{d})N_r$, where
$n_0=\left\lfloor\frac{saN'}{L_0}\right\rfloor$. For all
$s\in\left\{0,1,\cdots, \frac{\lfloor N_r/N'\rfloor
N'+(N-N_r)}{d}\right\}$, $\exists$ a unique integer pair $(m,n)$,
$0\leq m<\frac{N_r}{d}$, $0\leq n<N'$ such that $mN'=
sd+nN_r$ mod $N$, i.e., the sequence $B_0\circ A$ is identically
zero except at indices that are multiples of $d$ and the nonzero
terms are the products of two polyphase signals whence are
themselves polyphase signals.

Similarly, we can show that, for $\ell=1,2,\cdots,d-1$, the
sequence $B_\ell\circ A$, has nonzero polyphase terms at $nd-\ell$
only, where $n\in\mathbb{Z}$. Hence the sequence $B\circ A=
\sum_{\ell=0}^{d-1}B_\ell\circ A$ is a polyphase sequence.

To prove \textit{Corollary \ref{poly_by_comp}}, we first note that
the sequences generated differs from those generated by
\textit{Theorem \ref{poly_by_direct}} in that the perfect sequence
used in \textit{Corollary \ref{poly_by_comp}} is the $L$-fold
upsampled version of that used in \textit{Theorem
\ref{poly_by_direct}} while the unfiltered ZCZ sequences for the
former is an $L$-expanded version of those for the latter,
replacing each zero entry of $P$ by a length-$L$ string of zeros
and each nonzero entry by a complementary sequence $E_{ij}$ of
length $L$.

For the first construction of $B$ (\ref{eq:direct_prime}), we
immediately have, for $0 \leq s < N_r$, $\exists$ unique pair of
positive integers $(m,n)$, $0 < m \leq N_r-1$, $0 \leq n \leq
N'-1$ such that $mLN'=sL+nLN_r=k+nLN_r$. That is, in computing the
filtered sequence $C=G\circ A=\{c_k\}$, where $G=\{g_k\}\defeq
G_i$ and $c_k=\sum_{j=0}^{LN-1} g_j a_{|j-k|_{LN}}^*$, there is
only one nonzero term in the summands that add up to $c_k$, for
$k=sL, s=0,1,\cdots, N_r-1$. That this single nonzero convolution
term property holds for $sL < k < (s+1)L$ is obvious because of
the special structure of $G_i$. The proof for the case when the
second construction (\ref{eq:direct_notprime}) is employed follows
a similar line of argument.

\end{document}